\newcommand{\calN}{\mathcal{N}}
\newcommand{\T}{{\top}}
\newtheorem{assumption}{Assumption}
\newtheorem{theorem}{Theorem}
\newtheorem{remark}{Remark}
\definecolor{blue}{RGB}{0,0,0}
\begin{document}

\title{Three-dimensional Temperature Field Reconstruction for A Lithium-Ion Battery Pack: A Distributed Kalman Filtering Approach}

\author{Ning~Tian,
Huazhen~Fang and Yebin Wang 
\thanks{N. Tian and H. Fang are with the Department
of Mechanical Engineering, University of Kansas, Lawrence,
KS, 66045 USA e-mail: (ning.tian@ku.edu; fang@ku.edu).}
\thanks{
Y. Wang is with the  Mitsubishi Electric Research Laboratories, Cambridge, MA, 02139 USA e-mail: (yebinwang@ieee.org).}
}

\maketitle

\begin{abstract}
Despite the ever-increasing use across different sectors, the lithium-ion batteries (LiBs) have continually seen serious concerns over their thermal vulnerability. The LiB operation is associated with the heat generation and buildup effect, which manifests itself more strongly, in the form of highly uneven thermal distribution, for a LiB pack consisting of multiple cells. If not well monitored and managed, the heating may accelerate aging and cause unwanted side reactions. In extreme cases, it will even cause fires and explosions, as evidenced by a series of well-publicized incidents in recent years. To address this threat, this paper, for the first time, seeks to reconstruct the three-dimensional temperature field of a LiB pack in real time. The major challenge lies in how to acquire a high-fidelity reconstruction with constrained computation time. In this study, a three-dimensional thermal model is established first for a LiB pack configured in series. Although spatially resolved, this model captures spatial thermal behavior with a combination of high integrity and low complexity. Given the model, the standard Kalman filter is then distributed to attain temperature field estimation at substantially reduced computational complexity. The arithmetic operation analysis and numerical simulation illustrate that the proposed distributed estimation achieves a comparable accuracy as the centralized approach but with much less computation.
This work can potentially contribute to the safer operation of the LiB packs in various systems dependent on LiB-based energy storage, potentially widening the access of this technology to a broader range of engineering areas.

\end{abstract}

\begin{IEEEkeywords}
Lithium-ion batteries, battery pack, spatially resolved thermal modeling, distributed Kalman filtering, temperature estimation.
\end{IEEEkeywords}

\IEEEpeerreviewmaketitle

\section{Introduction}\label{Sec:Introduction}

\IEEEPARstart{T}his paper studies real-time reconstruction of the three-dimensional temperature field of a lithium-ion battery (LiB) pack in charging or discharging. It is known that LiB packs are prone to heat buildup in operation, which can cause many side reactions, degrade the powering performance and accelerate aging. Due to the heat transfer mechanisms, the heat accumulation can spread from one cell to another. Fires and explosions may occur then and devastate the entire pack in a few seconds, as evidenced by a few incidents raising public worry. This requires an effective monitoring of the thermal status, which, however, is a need unmet by the present literature. Connecting spatially resolved thermal modeling with estimation based on the Kalman filtering (KF), this paper derives computationally efficient algorithms to reconstruct a LiB pack's three-dimensional temperature field, which opens up a new opportunity for LiB pack thermal management. The results, offering a promising means of reducing thermal risks facing LiB packs, can potentially find significant use in a broad range of LiB-based systems in electrified transportation, renewable energy farms and grid energy storage.

{\em Background and literature review.} Since its commercialization in the early 1990s, LiBs have gained widespread use in various applications due to their high energy/power density, long cycle life and low self-discharge rate. They have established a dominant role in the consumer electronics sector and, owing to a continually declining manufacturing cost, are rapidly penetrating into the emerging sectors of electric vehicles and smart grid, where high-energy high-capacity energy storage is needed~\cite{bandhauer2011critical}. 
This trend has stimulated an intense interest in the research of high-performing battery management algorithms, with most of the existing works on state of charge (SoC) estimation to infer the amount of energy available in LiBs~\cite{Domenico:DSMC:2010,Fang:CEP:2014,Wang:TCST:2015,Fang:JPS:2014,Rahn:Wiley:2013,Smith:TCST:2010,Bartlett:TCST:2016,Dey:TCST:2015}, state of health
(SoH) estimation to track the aging status~\cite{Zou:JPS:2015,Moura:DSMC:2013,Lakkis:ECC:2015,Kim:JPS:2015,zou2016multi}, and optimal charging protocol design to optimally charge LiBs~\cite{Klein:ACC:2011,Suthar:PCCP:2013,Yan:Energies:2011,Perez:Mechatronics:2015,Fang:TCST:2016}. These efforts have provided strong support for the advancement of LiB systems.

Though widely considered as a promising technology, LiBs are susceptible to the thermal effects. Heat generation in LiBs during charging and discharging is always associated with irreversible overpotential heating, reversible entropic heating from electrochemical reactions, mixing heating and phase change heating, etc.~\cite{bernardi1985general}. The heating process can be intensified by high charging/discharging currents. The heat, without timely removal, will gradually build up, leading to many parasitic and side reactions involving a complex mix of multiple physical and electrochemical process. Performance degradation and aging acceleration will result consequently~\cite{bandhauer2011critical}. Due to the high reactivity of the lithium metal and flammability of the electrolyte, overheating of LiBs may cause fires and explosions in extreme cases. This is known as the thermal runaway, in which heat generation occurs at a much faster rate than dissipation and eventually ignites the LiBs. Testaments to this catastrophic threat are given by a few well-publicized LiB fire incidents that happened to Tesla Model S, Toyota Prius, Boeing 787 Dreamliner, NASA's Mars Surveyor and a navy's submarine (for an overview, see, e.g.,~\cite{abada2016safety} and the references therein), which have raised serious concerns over the safe use of LiBs. This situation thus has motivated a growing interest in real-time temperature monitoring, which represents a crucial way toward taming the thermal threats.

Currently, a significant amount of work has been devoted to temperature estimation using a thermal model and the surface temperature measurements~\cite{forgez2010thermal,lin2013online,lin2014lumped}. These studies consider lumped thermal models that concentrate the spatial dimensions into singular points and thus reduce thermal partial differential equations (PDEs) into very low-order ordinary differential equations (ODEs). Though advantageous for computation, this introduces a significant simplification because the temperature distribution is nonuniform spatially within a cell. For improvement, some recent works~\cite{sun2015online,debert2013observer,xiao2015model} study the temperature estimation with some awareness of the spatial nonuniformity, which uses thermal models accounting for the LiB cell’s spatial dimensions to a certain
extent. Yet the models are still simplified at the sacrifice of their physical fidelity. It is noteworthy that the foregoing studies are focused on thermal management for a single LiB cell. The issue can become more much more challenging when LiB packs are considered. With tens of cells stacked in a compact space, a LiB pack has larger dimensions and significantly complicated thermal behavior that will render the cell-level approaches
unproductive. Meanwhile, LiB packs are more susceptible to heating issues---a ``hot spot'' can quickly spread from
one cell to the others in a domino effect---and thus have a stronger need for thermal monitoring. The challenge can become more daunting as large-format high-capacity LiB cells are preferred increasingly for forming packs, since heat will be generated in larger amounts and more complex manners when the cell increases in size. {\color{blue}In~\cite{park2003dynamic,mahamud2011reciprocating,lin2011parameterization,lin2014temperature,shi2015multi}, the idea of lumped models is extended to characterize the thermal dynamics of LiB packs composed of small cylindrical cells}. However, studies of the spatial thermal behavior, which are crucial for effective thermal management for battery packs, are scant due to the research and development difficulty. The status quo, hence, is not even close to eliminating thermal threats that stand in the way of wider and safer LiB use.

{\em Overview of the proposed work.} Among the first of its kind, this paper proposes approaches to reconstruct the three-dimensional temperature field of a LiB pack in real time. The fundamental notion is to acquire a spatially resolved thermal model for a LiB pack and then apply the KF technique to estimate the spatially distributed temperature. However, the task is nontrivial given the complexity of a LiB pack's thermal behavior. To fulfill the goal, a twofold effort is made, which lies in modeling and KF-based estimation. 

Thermal modeling for LiBs has attracted a wealth of research, with the existing methodologies falling in three categories: 1) thermal models, which are concerned only with the heat phenomena and based on the thermal energy conservation principle, often given in the form of PDEs in three-dimensional space~\cite{chen2005thermal,chen2006thermal,guo2010three}, 2) coupled thermal-electrochemical models, which associate the equations for thermal behavior with those for electrochemical reactions~\cite{gomadam2002mathematical,wang2002computational,srinivasan2003analysis,smith2006power,kim2007three}, and 3) lumped parameter models, which reduce the spatially distributed heat transfer into a heat flow passing through several discrete points (e.g., two points representing the cell’s core and surface and connected by a thermal resistance)~\cite{forgez2010thermal,lin2013online,lin2014lumped,sun2015online,debert2013observer,xiao2015model,shi2015multi}. Among them, coupled thermal-electrochemical models can offer a sophisticated view of the LiB behavior with electrochemical reactions characterized at multiple scales. This, however, comes at the expense of computing burden formidable enough to defy any real-time estimation. For lumped models, the simplicity is conducive to estimation design for thermal management, but the spatial information loss weakens their capability for more effective spatial temperature monitoring. While the two types of models represent two extremes in terms of model fidelity or computational efficiency, the thermal models strike a valuable balance, thereby
offering great promises for thermal management with spatial awareness. Currently, despite prolific results on cell-level thermal modeling, the search for pack-level models is still at an early stage. This work, hence, will present the development of a three-dimensional LiB pack thermal model.

When the thermal model is available, the temperature field reconstruction will depend on the estimation technique, which is meant to estimate the temperature at any spatial point using the model and the temperature measurement data. Here, the KF, which is arguably the most celebrated estimation method, is one of the most promising candidate tools due to its ability to deal with the stochastic dynamic systems affected by noise---the thermal dynamics of a LiB pack can be subjected to the process noise in its evolution and the measurement noise when the temperature is measured by sensors. A direct application of the standard centralized KF (CKF) here is possible but will cause hefty computational costs that will not allow for an execution on real-time computing platforms. This is because of the KF's computational complexity being cubic with the size of the state space~\cite{hsieh1999optimal}, and a spatially resolved LiB pack model will have a substantial number of states, especially when the pack comprises many cells. To address this problem, a distributed KF (DKF) approach will be undertaken to achieve computational efficiency, which reduces a global KF into multiple local KFs running in parallel but with information exchange. The overall computational complexity of this approach will only increase linearly with the number of LiB cells in the pack, in contrast with the cubic increase for the CKF. This improvement thus places the proposed work in a more advantageous position for practical application.

{\em Summary of contributions.} The primary contribution of this work lies in the three-dimensional temperature field reconstruction to enable accurate and computationally efficient LiB thermal monitoring, which is the first study that we are aware of in this direction. LiB packs are prone to heat buildup, and monitoring the spatially distributed temperature field over time will be valuable for ensuring the thermal safety. To this end, spatially resolved thermal modeling for LiB packs is developed in this paper, which characterizes the dynamic thermal behavior with high integrity but still manageable computation time. Based on the model, a DKF approach is then leveraged to reconstruct the temperature field, which features notably lower computational complexity than the centralized estimation. These efforts thus can guarantee reconstruction at both high accuracy and affordable computational costs, which is verified through extensive simulations. Providing a useful way for tracking the temperature in and on the surface of a LiB pack, this work can generate potential benefits for safety-critical application of LiBs in many engineering systems.

{\em Organization.} The rest of the paper is organized as follows. Section~\ref{Sec:Modeling} introduces a three-dimensional thermal model for a LiB pack, which is built on the principles of heat transfer and energy balance. In Section III, the PDE-based model obtained in Section~\ref{Sec:Modeling} is reduced to an ODE-based state-space system through discretization over time and space for convenience of estimation. Section~\ref{Sec:Estimation} investigates the design of the estimation approaches based on the KF. Beginning with an introduction of the standard KF, the study will then focus on the distributed versions in order to cut down the computation, which is accompanied by a detailed analysis of the reduction in computational complexity. The simulation is offered in Section~\ref{Sec:Simulation} to demonstrate the efficacy of the proposed approaches. Finally, some concluding remarks are gathered in Section~\ref{Sec:Conclusion}.

\section{Battery Pack Thermal Model}\label{Sec:Modeling}

This section is devoted to spatially resolved modeling for a LiB pack with cells in serial configuration in order to characterize its spatial thermal behavior. This effort is based on the fundamental principles of heat transfer and energy balance and extends the cell-level modeling in~\cite{chen2005thermal} to a pack.

Consider the LiB pack shown in Figure~\ref{Fig:LiB-Pack-Schematic}. This pack consists of multiple identical prismatic cells configured in series (the cell is the same as in~\cite{chen2005thermal}). Each cell can be divided into two areas: the core region and the cell case that seals the core region inside. {\color{blue}The core region is the main body of a prismatic cell. It consists of many smaller cell units connected in parallel to provide high capacity, with each unit composed of electrodes, separators and current collectors. While this makes the prismatic cell similar to a module,  we still refer it as a cell as it is the basic building block of a battery pack.} The cell case is a metal container and also includes a contact layer filled with liquid electrolyte and in touch with the core region. No heat is produced within the case. In this setting, modeling will be performed next for the thermal dynamics in the core region and the case and on the boundaries, e.g., the core-case interface boundary, cell-cell interface boundary and cell-air interface boundary.

\begin{figure}[t]
\centering
\includegraphics[trim = {45mm 10mm 45mm 25mm}, clip, width=0.4\textwidth]{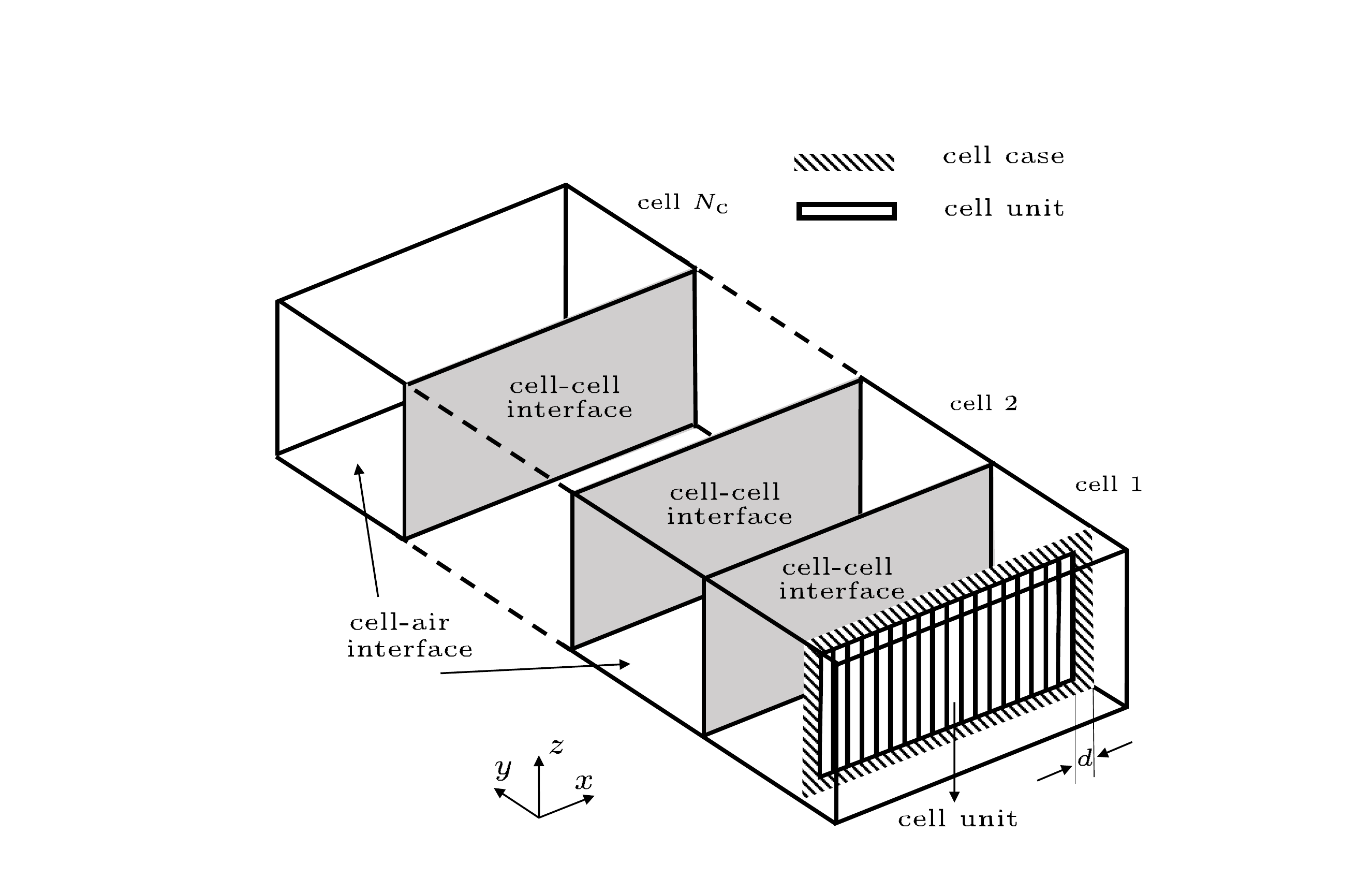}
\centering
\caption{Schematic diagram of a LiB pack.}
\label{Fig:LiB-Pack-Schematic}
\end{figure}

To proceed further, a brief review of the heat transfer mechanisms is offered. It is known that there are three ways for heat transfer from one place to another: conduction, convection and radiation. Conduction happens with in a substance or between substances in direct contact, caused by collision between atoms. Convection generally refers to the heat transfer with a fluid that moves between areas with temperature difference. Radiation is the energy emission by objects at nonzero temperature in the form of electromagnetic waves. An interested reader is referred to~\cite{bergman2006fundamentals} for more details.

Consider the thermal dynamics in the core region first. Here, convection and radiation can be ignored since electromagnetic waves can hardly transmit through the cell and the liquid electrolyte is of limited mobility. Consequently,
heat transfer within the core region is dominated by conduction, which can be expressed by the three-dimensional heat diffusion equation

\begin{align}\label{thermal-model}
\rho_{\rm co} c_{\rm co} \frac{\partial{T}}{\partial{t}}=\lambda_{\rm co} \nabla^2 T+q,
\end{align}
where $\rho_{\rm co}$, $c_{\rm co}$ and $\lambda_{\rm co}$ are the mass density, specific heat and thermal conductivity of the core region, respectively. In addition, $T$ and $q$, respectively, denote the core region's temperature in kelvins and heat generation density. A general characterization of $q$, as discussed in~\cite{al1999thermal,chen2006thermal,bandhauer2011critical}, is offered by
\begin{align}\label{heat-generation}
q=\frac{I}{V_{\rm co}}\left[\left(U_{\rm ocv}-U_{\rm t}\right) -T\frac{\textrm{d} U_{\rm ocv}}{\textrm{d} T} \right],
\end{align}
where $V_{\rm co}$, $I$, $U_{\rm ocv}$, $U_{\rm t}$ and $\textrm{d} U_{\rm ocv}/ \textrm{d} T$ denote the total volume of the core region, the current through pack (positive for discharge, negative for charge), the open-circuit voltage, the terminal voltage and the entropic heat coefficient, respectively. Here, the heat is assumed to be generated uniformly across the core region. The first term on the right-hand side of~\eqref{heat-generation} is the irreversible heating, and the second term is the reversible entropic heating from electrochemical reactions.

Similarly, conduction makes up the dominant part of the heat transfer in the case region, that is,
\begin{align}\label{thermal-model-case}
\rho_{\rm ca} c_{\rm ca} \frac{\partial{T}}{\partial{t}}=\lambda_{\rm ca} \nabla^{2} T,
\end{align}
where $\rho_{\rm ca}$, $c_{\rm ca}$ and $\lambda_{\rm ca}$ are the mass density, specific heat and thermal conductivity of the case, respectively. 

Next, consider the boundaries. To begin with, heat transfer on the core-case interface is mainly due to conduction. Assuming that core region and cell case are in perfect contact, the temperature and heat flux can be considered continuous on the interface~\cite{shiming2006heat}. Hence, the boundary conditions on the core-case interface are given by
\begin{equation}\label{boundary-core-case}
\begin{aligned}
T|_{\rm core} &= T|_{\rm case}, \ \lambda_{\rm co}\frac{\partial T}{\partial n}\bigg|_{\rm core} = \lambda_{\rm ca}\frac{\partial T}{\partial n}\bigg|_{\rm case},
\end{aligned}
\end{equation}
where $n$ is the normal direction of the core-case interface. The continuity of temperature and heat flux at a boundary also holds for the cell-cell interface with assumption that LiB cells are in close contact. Take cells 1 and 2 in Figure~\ref{Fig:LiB-Pack-Schematic} as an example. On their cell-cell interface,
\begin{equation}\label{boundary-cell-cell}
\begin{aligned}
T|_{\rm cell \ 1} &= T|_{\rm cell \ 2}, \
\lambda_{\rm ca}\frac{\partial T}{\partial n}\bigg|_{\rm cell \ 1} = \lambda_{\rm ca}\frac{\partial T}{\partial n}\bigg|_{\rm cell \ 2},
\end{aligned}
\end{equation}
where $n$ is the normal direction of the interface of cells 1 and 2. Different from the regions and interfaces above, the cell-air interface will see heat transfer due to all of conduction, convection and radiation. Therefore, the energy balance on this boundary is
\begin{align}\label{boundary1}
\lambda_{\rm ca}\frac{\partial T}{\partial n}\bigg|_{\rm cell} =q_{\rm conv}+q_{\rm r},
\end{align}
where $q_{\rm conv}$ and $q_{\rm r}$ represent the convective heat flux and the radiative heat flux on the cell-air interface, respectively. Here, $q_{\rm conv}$ and $q_{\rm r}$ are given by
\begin{align}\label{convection}
q_{\rm conv}&=h_{\rm conv} \left(T-T_{\rm air}\right),\\ \label{radiation}
q_{\rm r}&=\varepsilon \sigma (T^4 - T_{\rm air}^4 ),
\end{align}
where $h_{\rm conv}$, $\varepsilon$, $\sigma$ and $T_{\rm air}$ are the convective heat transfer coefficient, the emissivity, the Stefan-Boltzmann constant and the ambient air temperature, respectively. Note that if $|T-T_{\rm air}|/T_{\rm air}\ll 1$ as is often the case of LiB operation (both $T$ and $T_{\rm air}$ in kelvins),~\eqref{radiation} can be linearized around $T_{\rm air}$ as 
\begin{align}\label{radiation-linear}
q_{\rm r} = 4 \varepsilon \sigma T_{\rm air}^3 (T-T_{\rm air})=h_{\rm r}(T-T_{\rm air}),
\end{align}
where $h_{\rm r}$ is the radiative heat transfer coefficient. Combining \eqref{convection} and \eqref{radiation-linear}, the energy balance \eqref{boundary1} can be rewritten as
\begin{align}\label{boundary}
\lambda_{\rm ca}\frac{\partial T}{\partial n}\bigg|_{\rm cell}=h(T-T_{\rm air}),
\end{align}
where $h$ is the combined convective-radiative heat transfer coefficient.

Summarizing~\eqref{thermal-model}-\eqref{boundary}, one will obtain a complete thermal model for the considered LiB pack. This model builds on an awareness of the spatial temperature distribution. According to~\eqref{thermal-model}-\eqref{heat-generation}, heat will be produced within the core region when a current flows through the pack and transferred across the region by conduction. Conduction will also enable the propagation of heat within the case region, which is shown in~\eqref{thermal-model-case}. The boundary conditions at the core-case and cell-cell interfaces can be determined as in~\eqref{boundary-core-case}-\eqref{boundary-cell-cell} on the reasonable assumption of continuous temperature and heat flux. Meanwhile, heat will travel from the cell surface to the air driven by a mix of conduction, convection and radiation, as shown in~\eqref{boundary1}. The radiation effect at the cell-air interface is further linearized to simplify the model, as allowed by some mild conditions that can be easily met in a LiB pack's operation. This would lead to~\eqref{boundary}. Finally, the initial condition is also specified. In spite of the PDE-based representation, the obtained model still has constrained overall complexity advantageous for computation. In the next section, it will be further converted to ODE-based state-space form for the purpose of estimation design.

{\color{blue}
\begin{remark}\label{Remark:cooling-device}
{\rm (Extensions of the thermal model).} The thermal model above is developed in a basic battery pack setting and able to capture the most critical heat transfer phenomena underlying a pack's thermal behavior. Meanwhile, it can be readily extended to deal with more sophisticated settings. 1) Extension to a battery pack with a cooling system. The cooling effects can be accounted for in two ways. First, as suggested in~\cite{kim2014estimation}, one can regard the cooling system as the boundaries of the battery pack's thermal model and thus modify the boundary conditions accordingly. Second, one can develop a separate heat transfer model for the cooling system and determine its interaction with the pack's model. The two models can be combined to offer a complete description of the battery pack under cooling conditions. This idea is exploited in~\cite{mahamud2011reciprocating,lin2011parameterization,lin2014temperature}. 2) Extension to nonuniform heat generation. As is shown in~\eqref{heat-generation}, the heat generation is assumed to be even across a cell. The model built on this assumption can be modified to cope with spatially nonuniform heating if the gradient distribution of the potential and current density is captured. The literature includes some studies in this regard, see~\cite{bandhauer2011critical} and the references therein. In addition, the heat produced by electrical connections can also be included into the thermal model, which supposedly results from the passage of an electrical current through a resistance~\cite{kim2008effect}. 3) Extension to heterogeneous cells. While the above considers identical cells, cells of the same type but of different state or aging level can be dealt with by changing the model parameters. Further, if cells of different electrochemistries are used in an extreme case, one can first build separate models for each cell type and then couple them using the same heat transfer principles to obtain a pack-level model. It is noteworthy that, though based on the basic model, the methodology to be proposed next is still applicable to these extended models for thermal field reconstruction.
\hfill$\bullet$
\end{remark}
}

\section{Discrete State-Space Equation}\label{Sec:Model-Conversion}

In this section, the PDE-based model in Section~\ref{Sec:Modeling} is discretized in space and time to derive the state-space model using the finite difference formulation based on energy balance.

\subsection{Discretization}\label{Sec:discretization-subsection}

The finite difference formulation is applied to fulfill the discretization. Consider the LiB pack comprising $N_c$ cells shown in Figure~\ref{Fig:LiB-Pack-Schematic}. It can be subdivided into a large number of volume elements, giving rise to a three-dimensional grid with many nodes, see Figure~\ref{Fig:grid-of-pack}. Specifically, a LiB cell, with dimensions of $L_x \times L_y \times L_z$, is subdivided by a grid with $(m\times n\times p)$ nodes. Here, two adjacent cells share the $mp$ nodes on the contact interface between them. Then, each node can be labeled by its $xyz$-coordinates, i.e., $(i,j,k)$, which ranges from $(1,1,1)$ to $(m,nN_{\mathrm{c}}-N_{\mathrm{c}}+1,p)$. A node is linked with a control volume. Within this volume, the temperature is considered uniform and assigned as the temperature of the node.
The control volume of a node in the core region is uniform with a size of $(a\times b\times c)$, where $a$, $b$ and $c$ are the volume's edge length in $x$, $y$ and $z$ directions, respectively; yet the control volume of a node in the case region varies subject to its specific location. Since the case thickness is quite slim relative to the cell's dimensions, the temperature change in the thickness direction then will be minor. Thus for the node in the case, we let its edge parallel to the case's thickness direction be larger than the case's thickness length $d$, e.g., the control volume of (2,1,2) has a size of $(a\times (b/2+d)\times c)$. As such, $
L_x = (m-1)a+2d$,
$L_y = (n-1)b+2d$, and
$L_z = (p-1)c+2d$.

\begin{figure}[t]
\centering
\includegraphics[trim = {60mm 20mm 50mm 30mm}, clip, width=0.4\textwidth]{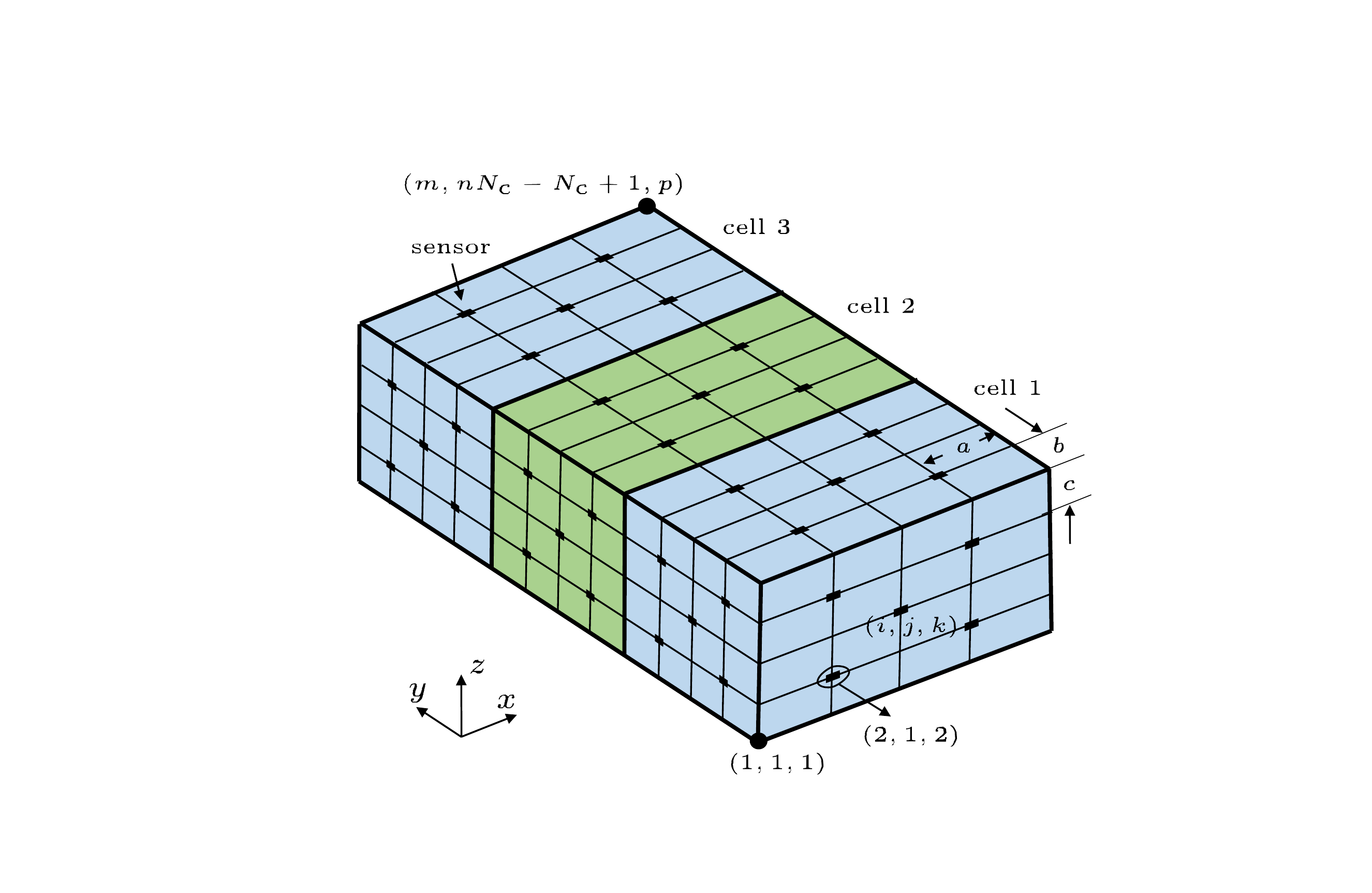}
\centering
\caption{Schematic diagram of the nodes in a LiB pack comprising three cells, namely, $N_{\rm c}=3$.}
\label{Fig:grid-of-pack}
\end{figure}

\begin{figure}[t]
\centering
\includegraphics[trim = {80mm 35mm 65mm 25mm}, clip, width=0.3\textwidth]{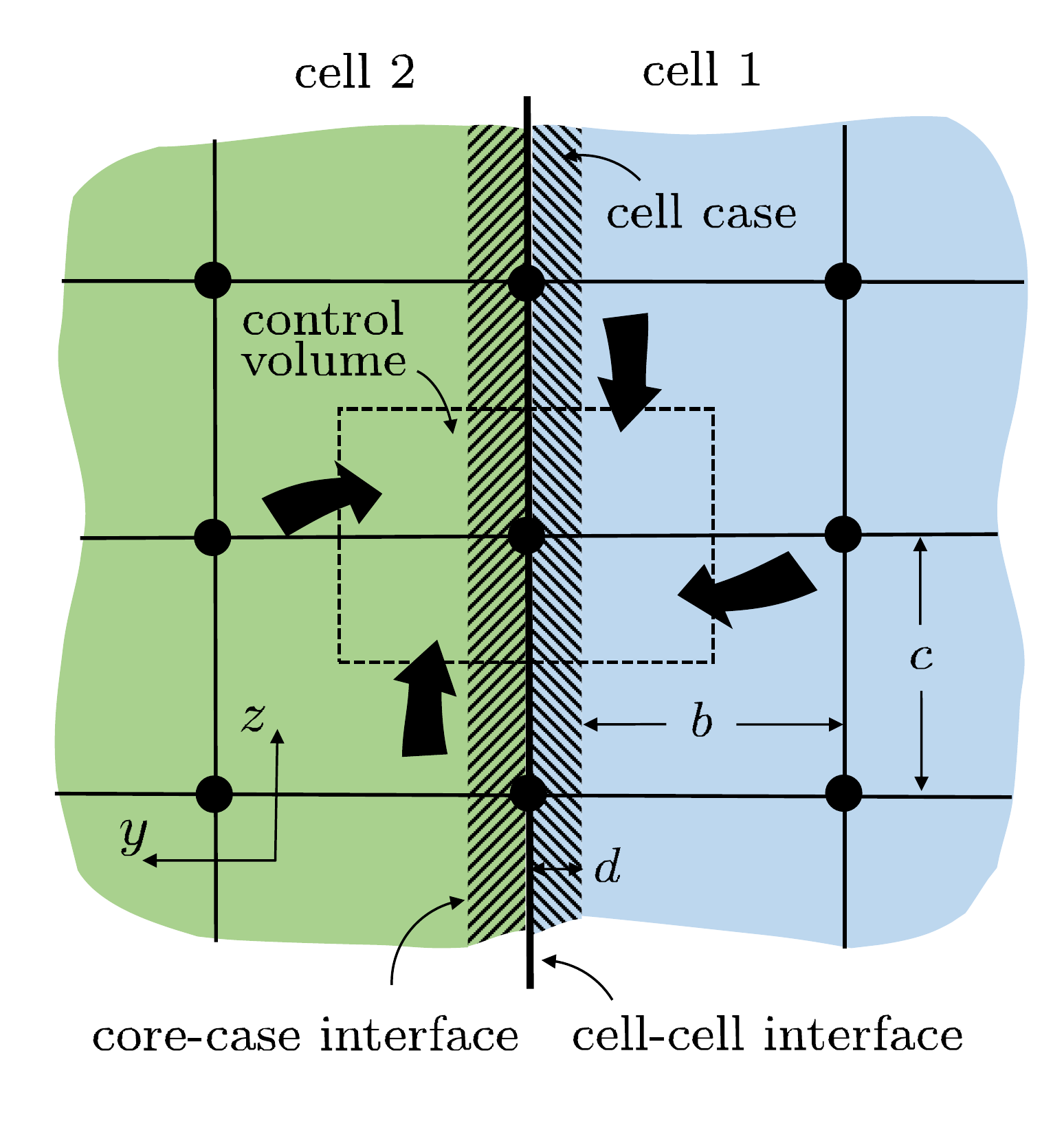}
\centering
\caption{The $y$-$z$ cross-section of the control volume of a node on the cell-cell interface.}
\label{Fig:interface-grid}
\end{figure}

In this paper, the finite-difference equation at each node is developed by the \textit{energy balance approach} \cite{bergman2006fundamentals}. That is, the energy conservation law is applied to each volume. Here the direction of heat transfer on each volume's surfaces is assumed to be toward the node, see Figure~\ref{Fig:interface-grid}. 
For a node in the core region, e.g., (2,2,2) of the grid in Figure~\ref{Fig:grid-of-pack}, it receives the conductive heat flow from its surrounding six nodes. The finite-difference equation can then be expressed as 
\begin{align} \label{core-region}
\begin{split}
\rho & c_p abc \frac{T_{i,j,k}^{t+\Delta t}-T_{i,j,k}^{t}}{\Delta t}
=\lambda_{{\rm co},x} \frac{T_{i+1,j,k}^{t}-T_{i,j,k}^{t}}{a}bc \\ &
+\lambda_{{\rm co},x} \frac{T_{i-1,j,k}^{t}-T_{i,j,k}^{t}}{a}bc 
+\lambda_{{\rm co},y} \frac{T_{i,j+1,k}^{t}-T_{i,j,k}^{t}}{b}ac \\ &
+\lambda_{{\rm co},y} \frac{T_{i,j-1,k}^{t}-T_{i,j,k}^{t}}{b}ac 
+\lambda_{{\rm co},z} \frac{T_{i,j,k+1}^{t}-T_{i,j,k}^{t}}{c}ab \\ &
+\lambda_{{\rm co},z} \frac{T_{i,j,k-1}^{t}-T_{i,j,k}^{t}}{c}ab
+\frac{abc(U_{\rm ocv}-U_{\rm t}-\frac{\textrm{d} U_{\rm ocv}}{\textrm{d} T}T_{i,j,k}^{t})}{\rm{V}_{\rm co}}I,
\end{split}
\end{align}
{\color{blue}where $\rho$, $c_p$ and $\Delta t$ are the mass density, specific heat capacity of the cell and time step, respectively.} Here, the thermophysical parameters, (e.g., $\rho$, $c_p$ and $\lambda_{\rm co}$), are time invariant, and $\mathrm{d}U_{\rm ocv}/{\mathrm{d} T}$ is also assumed to be independent of time within the operating range of the LiB pack~\cite{chen2005thermal}. Therefore, \eqref{core-region} can be regarded as a linear equation. The product of $\rho$ and $c_p$ is averaged based on the volume of the core region $V_{\rm co}$ and the case $V_{\rm ca}$ for simplification, namely,
\begin{align}\label{density}
\rho c_p = \frac{\rho_{\rm co} c_{\rm co} V_{\rm co}+\rho_{\rm ca} c_{\rm ca} V_{\rm ca}}{V_{\rm co}+V_{\rm ca}}.
\end{align}
The strategy in the calculation of $\rho_{\rm co} c_{\rm co}$, $\rho_{\rm ca} c_{\rm ca}$ and the three-dimensional thermal conductivity $\lambda_{\rm co}$ is the same with that used in~\cite{chen2005thermal} and thus omitted here.

Next, consider a node on the cell-cell boundary interface as shown in Figure~\ref{Fig:interface-grid}. Its control volume in the $y$ direction is $2d+b$. With properly selected edge length in $x$ and $z$ directions, the control volume of this node can be regarded as isothermal. Since the core region and the case region are made up of different materials, the thermal conduction length between the node and its adjacent node in the $y$ direction is $b$ instead of $b+d$. The finite-difference equation \cite{cengel2015heat} for this boundary is thus given by
\begin{align} \label{interface}
\begin{split}
& \rho c_p ac(b+2d) \frac{T_{i,j,k}^{t+\Delta t}-T_{i,j,k}^{t}}{\Delta t}
=\lambda_{{\rm co},x} \frac{T_{i+1,j,k}^{t}-T_{i,j,k}^{t}}{a}bc \\ &
+\lambda_{{\rm ca},x} \frac{T_{i+1,j,k}^{t}-T_{i,j,k}^{t}}{a}2cd 
+\lambda_{{\rm co},x} \frac{T_{i-1,j,k}^{t}-T_{i,j,k}^{t}}{a}bc \\ &
+\lambda_{{\rm ca},x} \frac{T_{i-1,j,k}^{t}-T_{i,j,k}^{t}}{a}2cd 
+\lambda_{{\rm co},y} \frac{T_{i,j+1,k}^{t}-T_{i,j,k}^{t}}{b}ac \\ &
+\lambda_{{\rm co},y} \frac{T_{i,j-1,k}^{t}-T_{i,j,k}^{t}}{b}ac 
+\lambda_{{\rm co},z} \frac{T_{i,j,k+1}^{t}-T_{i,j,k}^{t}}{c}ab \\ &
+\lambda_{{\rm ca},z} \frac{T_{i,j,k+1}^{t}-T_{i,j,k}^{t}}{c}2ad 
+\lambda_{{\rm co},z} \frac{T_{i,j,k-1}^{t}-T_{i,j,k}^{t}}{c}ab \\ &
+\lambda_{{\rm ca},z} \frac{T_{i,j,k-1}^{t}-T_{i,j,k}^{t}}{c}2ad
+\frac{abc(U_{\rm ocv}-U_{\rm t}-\frac{\textrm{d} U_{\rm ocv}}{\textrm{d} T}T_{i,j,k}^{t})}{\rm{V}_{\rm co}}I, 
\end{split}
\end{align}

Then, consider a node on the cell-air boundary interface of the pack. Take the node (2,1,2) in Figure~\ref{Fig:grid-of-pack} as an example. It receives both conductive heat transfer from its surrounding nodes and combined convective-radiative heat transfer from the ambient air. Similar to the nodes on the cell-cell interface, the edge length of the control volume of node (2,1,2) should be larger than $d$ in the $y$ direction, and the effective thermal conduction length between the node and its adjacent node in the $y$ direction would also be equal to $b$. Hence, the finite-difference equation will be 
\begin{align} \label{surface}
\begin{split}
\rho & c_p ac\left(\frac{b}{2}+d\right) \frac{T_{i,j,k}^{t+\Delta t}-T_{i,j,k}^{t}}{\Delta t}
=\lambda_{{\rm co},x} \frac{T_{i+1,j,k}^{t}-T_{i,j,k}^{t}}{a}\frac{b}{2}c \\ & 
+\lambda_{{\rm ca},x} \frac{T_{i+1,j,k}^{t}-T_{i,j,k}^{t}}{a}cd 
+\lambda_{{\rm co},x} \frac{T_{i-1,j,k}^{t}-T_{i,j,k}^{t}}{a}\frac{b}{2}c \\ &
+\lambda_{{\rm ca},x} \frac{T_{i-1,j,k}^{t}-T_{i,j,k}^{t}}{a}cd 
+\lambda_{{\rm co},y} \frac{T_{i,j+1,k}^{t}-T_{i,j,k}^{t}}{b}ac \\ &
+\lambda_{{\rm co},z} \frac{T_{i,j,k+1}^{t}-T_{i,j,k}^{t}}{c}a\frac{b}{2} 
+\lambda_{{\rm ca},z} \frac{T_{i,j,k+1}^{t}-T_{i,j,k}^{t}}{c}ad \\ &
+\lambda_{{\rm co},z} \frac{T_{i,j,k-1}^{t}-T_{i,j,k}^{t}}{c}a\frac{b}{2} 
+\lambda_{{\rm ca},z} \frac{T_{i,j,k-1}^{t}-T_{i,j,k}^{t}}{c}ad \\ &
+ach(T_{\rm air} - T_{i,j,k}^{t}) 
+\frac{abc(U_{\rm ocv}-U_{\rm t}-\frac{\textrm{d} U_{\rm ocv}}{\textrm{d} T}T_{i,j,k}^k)}{2\rm{V}_{\rm co}}I.
\end{split}
\end{align}

Thus far, the PDE-based thermal model has been systematically discretized in space and time. The obtained equations present a linear ODE system that is high-dimensional but more amenable to estimation design and implementation. %

\subsection{State-Space Model Development}\label{Sec:state-subsection}

Reorganizing the ODEs~\eqref{core-region}-\eqref{surface}, one can derive a high-dimensional state-space representation of the following general form:
\begin{align}\label{state-equation}
\bm{x}_{k+1} = \bm{F}_k \bm{x}_{k} + \bm{G}_k\bm{u}_k,
\end{align}
where $\bm{x}$ is the state vector summarizing the temperature at all the nodes, $\bm{u}$ the input vector based on the ambient temperature $T_{\rm air}$ and $I$ the current. 
Specifically,
\begin{align*}
\bm{x} =
\left[ \begin{array}{c}
\bm{x}^{(1)} \\
\mathbf{\vdots} \\
\bm{x}^{(N_{\rm c})} 
\end{array} \right] 
\in \mathbb{R}^{N\times 1},
\end{align*}
where
\begin{align*}
\bm{x}^{(l)} =
\left[ \begin{matrix}
T_{1,1,1}^{(l)} &
\cdots &
T_{m,n,p}^{(l)} 
\end{matrix} \right]^\T 
\in \mathbb{R}^{N_l\times 1},
\end{align*}
Here, for ease of the decomposition strategy design toward distributed estimation, the global state vector $\bm x$ is established as an aggregation of local state vector $\bm x^{(l)}$ of each cell. To accommodate this representation, the pack-based numbering method introduced in Section~\ref{Sec:discretization-subsection} is modified accordingly to be cell-based---for each cell, the nodes are numbered from $(1,1,1)$ to $(m,n,p)$. Although the length of the vector $\bm x$ is increased from $N-mp(N_{\rm c}-1)$ to $N$ as a result of the nodes in the cell-cell interface accounted for separately, this will allow each cell to be treated as a subsystem and pave the way for distributed estimation algorithm design, as will be shown later. Besides, for notational convenience, the discrete time index will be denoted by $k$ in the remainder of the paper in replacement of $t$.
The input $\bm{u}$ is given by
\begin{align*}
\bm{u} =
\left[ \begin{matrix}
T_{\rm air} &
I 
\end{matrix} \right]^\T \in \mathbb{R}^{2\times 1},
\end{align*}
which sums up the ambient temperature and current in the role of driving the system.
Furthermore, the matrices $\bm{F}$ and $\bm{G}$ can be expressed in the block form:
\begin{align*}
\bm{F}&=
\left[ \begin{array}{cccc}
\bm{F}_{11} & \bm{F}_{12} & &\mathbf{0} \\
\bm{F}_{21} & \bm{F}_{22} & {\ddots} & \\
& {\ddots} & {\ddots} &{\ddots} \\
\mathbf{0} & & {\ddots} & \bm{F}_{N_{\rm c},N_{\rm c}} 
\end{array} \right] \in \mathbb{R}^{N\times N},\\
\bm{G}&=
\left[ \begin{array}{c}
\bm{G}_{1} \\
\vdots \\
\bm{G}_{N_{\rm c}} 
\end{array} \right] \in \mathbb{R}^{N\times 2},
\end{align*}
where the block entries can be readily determined from~\eqref{core-region}-\eqref{surface}.
It is interesting to note that the matrix $\bm{F}$ is not only tridiagonal but also diagonally dominant and sparse. This is mainly because each cell will only exchange heat with its adjacent cells and adjacent cells only share a limited number of nodes.

Let cell $l$ in the pack be equipped with $M_l$ thermocouples to measure the temperature. The measurement equation can then be expressed as 
\begin{align}\label{measurement-equation-1}
\bm{y}_{k}^{(l)} = \bm{H}_l \bm{x}^{(l)}_k,
\end{align}
where $\bm{y}^{(l)} \in \mathbb{R}^{M_l} $ and $\bm{H}_l \in \mathbb{R}^{M_l \times N_l}$ are the temperature measurement and measurement matrix, respectively. For $\bm{H}$, the entries corresponding to the nodes directly measured will be set equal to 1, and all the other entries zero. Aggregating all the measurements together, the measurement equation for the entire pack is then given by
\begin{align}\label{measurement-equation-2}
\bm{y}_{k} = \bm{H} \bm{x}_k,
\end{align}
where 
\begin{align*}
\bm{y} &=
\left[ \begin{matrix}
\bm{y}^{(1)} \\
{\vdots} \\
\bm{y}^{(N_{\rm c})} 
\end{matrix} \right] \in \mathbb{R}^{M\times 1}, \\
\bm{H} &= \left[ \begin{matrix}
\bm{H}_1 & & \\
& {\ddots} & \\
& & \bm{H}_{N_{\rm c}} 
\end{matrix} \right] \in \mathbb{R}^{M\times N}, 
\end{align*}
where $M=N_{\rm c} M_l$. 

From above,~\eqref{state-equation} and~\eqref{measurement-equation-2} form the state-space model, which characterizes the propagation and measurement of a LiB pack's thermal dynamics based on a group of ODEs. With this model, one will be in a good position to conduct state estimation toward reconstructing the temperature field.

\section{Kalman Filter Estimation}\label{Sec:Estimation}

The state-space thermal model developed in Section~\ref{Sec:Model-Conversion} describes the dynamic thermal behavior of a LiB pack in charging/discharging. With this model, optimal estimation can be employed to enable reconstruction of the temperature field from the temperature measurement data. In this work, the KF will be exploited as the solution tool, which has achieved proven success in a wide range of engineering fields. However, the standard centralized KF will not fit with this application, because of a fine characterization of the LiB pack's thermal dynamics will imply a large size of the state space and cause heavy computation. This thus motivates the use of a distributed approach to build computational efficiency. In the following, the centralized KF will be introduced first, and then a distributed version presented and analyzed in detail.

\subsection{Centralized Kalman Filtering}\label{Sec:CKF}

Consider the state-space equations~\eqref{state-equation} and~\eqref{measurement-equation-2} and for convenience, replicate them with noise terms added as follows:
\begin{equation}\label{new-system}
\left\{
\begin{aligned}
\bm{x}_{k+1} &= \bm{F}_k \bm{x}_k + \bm{G}_k \bm{u}_k + \bm{w}_k,\\
\bm{y}_k &= \bm{H} \bm{x}_k + \bm{v}_k.
\end{aligned}
\right.
\end{equation}
Here, the vectors with compatible dimensions, $\bm w_k$ and $\bm v_k$, are are added to account for the process noise and measurement noise that exist in the thermal dynamic processes of a LiB pack. They are assumed to be zero-mean Gaussian white noises with covariances of $\bm{Q}\geq 0$ and $\bm{R}>0$, respectively. Suppose the initial state $\bm{x}_0$ is a Gaussian random vector with mean $\hat{\bm{x}}_{0|0}$ and covariance $\bm{P}_{0|0}$. Note that $\hat{\bm x}_{0|0}$ indeed makes up the initial guess of $\bm{x}_0$, and that $\bm{P}_{0|0}$ represents the estimation error covariance. Then
application of the standard CKF to~\eqref{new-system} can be performed at each future time instant. This procedure consists of two steps, {\em prediction} and {\em update}. The one-step-forward prediction yields the estimate of $\bm{x}_k$, denoted as $\hat{\bm{x}}_{k|k-1}$, using the measurements collected up to time $k-1$. Then upon the arrival of $\bm{y}_k$, $\hat{\bm{x}}_{k|k-1}$ will be updated to $\hat{\bm{x}}_{k|k}$ leveraging the information conveyed by $\bm{y}_k$ about $\bm{x}_k$. In the meantime, the estimation error covariances associated with both estimates are computed accordingly to quantify the uncertainties of the obtained estimates.

When the state estimate $\hat{\bm{x}}_{k-1|k-1}$ is generated, the one-step-forward prediction can be made through
\begin{align}\label{CKF-prediction1}
\hat{\bm{x}}_{k|k-1} &= \bm{F}_{k-1} \hat{\bm{x}}_{k-1|k-1}+ \bm{G}_{k-1}\bm{u}_{k-1},\\ \label{CKF-prediction2}
\bm{P}_{k|k-1} & = \bm{F}_{k-1} \bm{P}_{k-1|k-1} \bm{F}_{k-1}^{\top}+ \bm{Q}.
\end{align}
After $\hat x_{k|k-1}$ is produced, it will be of next interest to investigate the updated state estimate. When the new measurement $\bm{y}_k$ becomes available, the update step can be performed as follows to correct the prediction:
\begin{align}\label{CKF-filter1}
\hat{\bm{x}}_{k|k} &= \hat{\bm{x}}_{k|k-1} + \bm{K}_{k} (\bm{y}_{k}-\bm{H} \hat{\bm{x}}_{k|k-1} ) , \\ \label{CKF-filter2}
\bm{K}_{k} &= \bm{P}_{k|k-1} \bm{H}^{\top} (\bm{H} \bm{P}_{k|k-1} \bm{H}^{\top}+ \bm{R} ) ^{-1}, \\ \label{CKF-filter3}
\bm{P}_{k|k}& = \bm{P}_{k|k-1} - \bm{K}_{k} \bm{H} \bm{P}_{k|k-1} .
\end{align}
The CKF will execute the above steps recursively over time to generate the state estimate at each time instant. The CKF is the optimal among all filters if the aforementioned Gaussian assumptions are satisfied and optimal among all linear filters in the non-Gaussian case. Such an optimality establishes the foundation for the practical utility of the CKF approach. The CKF, however, is not well suited to estimate the temperature for a LiB pack, because the high-dimensional thermal model of the LiB pack will imply considerable computation (detailed computational complexity analysis will be given in Section~\ref{Sec:Complexity-Analysis}). Hence, the CKF will be distributed next to bring down the computational cost.

\subsection{Distributed Kalman Filtering}\label{Sec:DKF}

Rather than estimate the global state in a centralized manner, the DKF will consider the pack system as a combination of multiple cell-based subsystems and run a series of local KFs in parallel, each one corresponding to a cell. Because of the mutual influence between the cells and their thermal behavior, the local KFs will exchange information according to the existing communication topology to accomplish the estimation. The local estimates, when collected and put together, will comprise a complete picture of the entire pack's temperature field.

Consider a LiB pack composed of $N_{\rm c}$ cells wired in series, which are numbered in order from 1 to $N_{\rm c}$. A three-cell pack is shown in Figure~\ref{Fig:Cell-Network} as an example. For cell $l$ and $i$, they are said to be neighbors if they are adjacent. The
neighborhood of $l$, $\calN_l$, is defined as the set of its neighbor cells, and in this setting, $\calN_l = \{l-1, l+1\}$. Here, it is assumed that cell $l$ stores the structural information of its neighbors, specifically, $\bm{F}_{li}$ for $i\in \calN_l$. To minimize the communication cost, information exchange protocol is only enforced between neighbors. That is, cell $l$ only communicates with $\calN_l$.

\begin{figure}[t]
\centering
\includegraphics[trim = {0mm 40mm 10mm 10mm}, clip, width=0.5\textwidth]{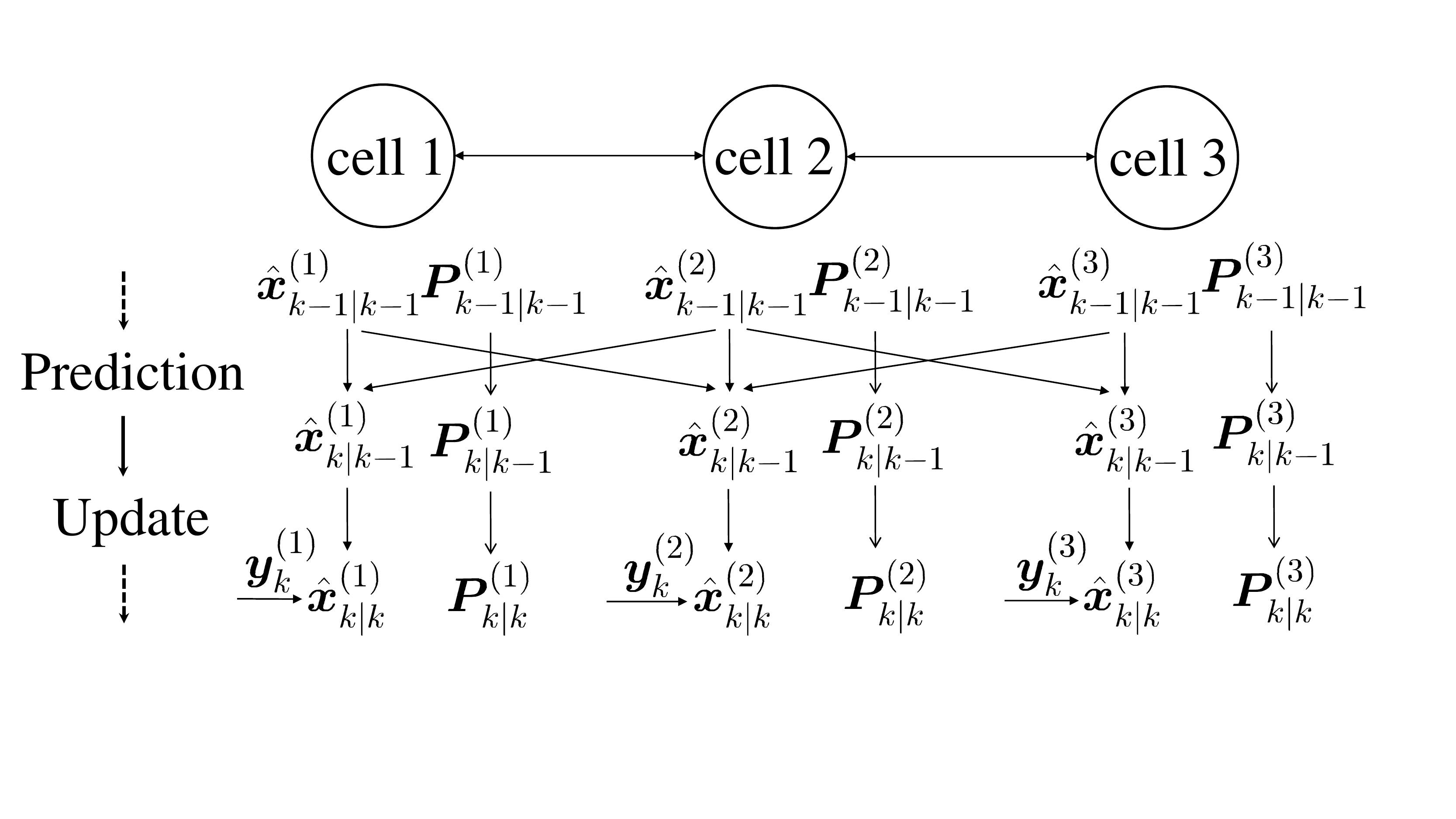}
\centering
\caption{A schematic of the DKF for a three-cell pack.}
\label{Fig:Cell-Network}
\end{figure}

As such, the state-space equation for subsystem $S_l$ can be written as
\begin{equation}\label{Local-Subsystem}
\left\{
\begin{aligned}
\bm{x}_{k+1}^{(l)}&= \bm{F}_{ll,k} \bm{x}_{k}^{(l)} + \sum_{i\in \mathcal{N}_l}\bm{F}_{li,k}\bm{x}_k^{(i)}+\bm{G}_{l,k} \bm{u}_{k}+\bm{w}_{k}^{(l)} , \\ 
\bm{y}_{k}^{(l)} &= \bm{H}_l \bm{x}^{(l)}_k +\bm{v}^{(l)}_k,
\end{aligned}
\right.
\end{equation}
where cell-wise decomposition is also applied to $\bm w_k$ and $\bm v_k$. It is noted that the evolution of cell $l$'s state is not only self-driven but also affected by the neighboring cells, as a result of the cell-to-cell coupling of the thermal dynamics. All cells yet share the same $\bm{u}_k$ because the serial connection implies the same charging/discharging current across the circuit. In addition, each cell is only aware of its own temperature measurements. 

For the above cell $l$-based subsystem, the KF approach can be adjusted for local state estimation. This can be attained by applying the prediction-update procedure in analogy to~\eqref{CKF-prediction1}-\eqref{CKF-filter3}. Specifically, the prediction can be given by
\begin{align}\nonumber
\hat{\bm{x}}_{k|k-1}^{(l)}= &\bm{F}_{ll,k-1} \hat{\bm{x}}_{k-1|k-1}^{(l)} + \sum_{i\in \mathcal{N}_l}\bm{F}_{li,k-1}\hat{\bm{x}}_{k-1|k-1}^{(i)}\\  \label{DKF-prediction1}
&+\bm{G}_{l,k-1} \bm{u}_{k-1} , \\ \label{DKF-prediction2}
\bm{P}^{(l)}_{k|k-1} = &\bm{F}_{ll,k-1} \bm{P}_{k-1|k-1}^{(l)} \bm{F}^{\top}_{ll,k-1} + \bm{Q}_l.
\end{align}
Here, cell $l$'s state prediction, $\hat{\bm{x}}_{k|k-1}^{(l)}$, depends on not only its own but also its neighbors' state estimates from the previous time instant. All cells are still driven by the same input $\bm{u}_k$ because cells in serial connection are subjected to the same current.
On its arrival, $\bm{y}_k^{(l)}$ can be used to update $\hat{\bm{x}}_{k|k-1}^{(l)}$ as follows:
\begin{align}\label{DKF-filter1}
\hat{\bm{x}}^{(l)}_{k|k} &= \hat{\bm{x}}^{(l)}_{k|k-1} + \bm{K}_{k}^{(l)} (\bm{y}^{(l)}_{k}-\bm{H}_l \hat{\bm{x}}_{k|k-1}^{(l)} ) , \\ \label{DKF-filter2}
\bm{K}_{k}^{(l)} &= \bm{P}_{k|k-1}^{(l)} \bm{H}^{\top}_l (\bm{H}_l \bm{P}_{k|k-1}^{(l)} \bm{H}^{\top}_l+ \bm{R}_l)^{-1}, \\ \label{DKF-filter3}
\bm{P}_{k|k}^{(l)} &= \bm{P}_{k|k-1}^{(l)} - \bm{K}_{k}^{(l)} \bm{H}_l \bm{P}_{k|k-1}^{(l)}.
\end{align}
Note that no information exchange with neighbors is required in the update step. The two steps shown in~\eqref{DKF-prediction1}-\eqref{DKF-filter3} then constitute the DKF algorithm. Running the DKF in parallel for each cell, the local temperature field will be estimated through time, and all the local estimation results when combined will provide the full view of the pack's temperature field.

It is noteworthy that the above DKF algorithm, compared with the CKF, involves approximation for two reasons. First, each cell only has a local rather than global knowledge of the system's dynamic behavior, making certain information loss inherent in each local DKF. To see this, the nominal prediction error covariance $\bm{P}^{(l)}_{k|k-1}$, differing from the CKF, evolves only from its predecessor without fusion of the counterparts of the other cells, as shown in~\eqref{DKF-prediction2}. This shows $\bm{P}^{(l)}_{k|k-1}$ is only approximate to the true prediction error covariance. Second, part of the approximation is made to reduce the communication and computation costs. Looking at~\eqref{DKF-prediction2} again, one can see that a local cell does not consider its neighbors in its forward propagation of its prediction error covariance. This will obviate the need for the exchange of the estimation error covariance between neighboring cells and further, the local computational effort. However, such an approximation will not seriously compromise the estimation accuracy. Since $\bm{F}$ is diagonally dominant and $\bm{H}$ block-diagonal due to the pack's serial connection architecture, $\bm{P}$ will also be diagonally dominant. The self-propagation of the local estimation error covariance, as a result, will not bring much loss of estimation accuracy. The above DKF algorithm, to our knowledge, is the most computationally fast among its kind and highly suitable for the considered LiB pack thermal monitoring problem.

\begin{table*}\centering
\caption{Arithmetic operation requirements of the CKF, DKF and SS-DKF algorithms.}
\begin{tabular}{cccc}
\toprule%
Algorithm & Number of multiplications & Number of additions & Complexity \\
\midrule
CKF & $3N^3+2N^2M+2NM^2+M^3+N^2+2NM+2N$ & $3N^3+2N^2M+2NM^2+M^3-N^2+N$ & $O(N^3)$\\
DKF & $N_{\rm c}\left(3N_l^3+2N_l^2M_l+2N_lM_l^2+M_l^3+3N_l^2+2N_lM_l+2N_l\right)$ & $N_{\rm c} \left(3N_l^3+2N_l^2M_l+2N_lM_l^2+M_l^3+N_l^2+N_l\right)$ & $O(N_{\rm c} N_l^3)$\\
SS-DKF & $N_{\rm c} \left(2N_l^2+2N_lM_l+2N_l \right) $ & $N_{\rm c} \left(3N_l^2+2N_lM_l+N_l\right)$ & $O(N_{\rm c} N_l^2)$ \\
\bottomrule
\end{tabular}
\label{Tab:Complexity-Comparison}
\end{table*}

\subsection{Steady-State Distributed Kalman Filtering}\label{Sec:SS-DKF-algorithm}

{\color{blue}
An opportunity can be identified that, if some mild reduction is introduced for the considered thermal model, we can obtain another DKF approach with much higher computational efficiency. To be specific, consider the heat generation equation~\eqref{heat-generation}. Many studies in the literature suggest that its second term often has a negligible magnitude in comparison with the first term and thus can be ignored~\cite{lin2014lumped,kim2014estimation}. With this simplification,~\eqref{heat-generation} can be reduced as $q={I\left(U_{\rm ocv}-U_{\rm t}\right)}/{V_{\rm co}}$. It is then found that $\bm F$ becomes time-invariant in this case, which will allow us to develop a more computationally efficient DKF for temperature field reconstruction. The development is as follows.
}

Before proceeding further, we make an assumption at first.
\begin{assumption}\label{assumption1}
The pair $\left(\bm{F}_{ll},\bm{H}_l\right)$ is detectable and the pair $\left(\bm{F}_{ll}, \bm{Q}_l^{1\over 2} \right)$ stabilizable for $l\in \{1,\cdots, N_{\rm c}\}$.
\end{assumption}
If Assumption~\ref{assumption1} holds, the DKF algorithm in~\eqref{DKF-prediction1}-\eqref{DKF-filter3} will achieve steady state. Specifically, $\bm{P}_{k|k-1}^{(l)}$ will converge to a unique stabilizing solution, $\bar{\bm{P}}^{(l)}$, of the discrete algebraic Riccati equation
\begin{align*}
\bm{X}= &\bm{F}_{ll}\bm{X}\bm{F}_{ll}^\T - \bm{F}_{ll} \bm{X}\bm{H}_l^\T \left( \bm{H}_l \bm{X} \bm{H}_l^\T+\bm{R}_l \right)^{-1} \bm{H}_l \bm{X} \bm{F}_{ll}^\T +\bm{Q}_l ,
\end{align*}
where $\bm{X}$ is an unknown symmetric positive-definite matrix with compatible dimensions. The gain matrix $\bm{K}_k^{(l)}$ in~\eqref{DKF-filter2} will consequently reach a fixed point
\begin{align}\label{SS-DKF-gain}
\bar{\bm{K}}^{(l)} = \bar{\bm{P}}^{(l)} \bm{H}^{\T}_{l} \left(\bm{H}_l \bar{\bm{P}}^{(l)} \bm{H}^{\T}_l + \bm{R}_l \right)^{-1},
\end{align}
which can ensure $\bm{F}_{ll}(\bm{I}_l-\bar{\bm{K}}^{(l)}\bm{H}_l)$ to be stable~\cite{anderson2015optimal}. With fixed $\bar{\bm P}^{(l)}$ and $\bar{\bm{K}}^{(l)}$, the state prediction and update can be accomplished more efficiently:
\begin{align}\label{SS-DKF}
\hat{\bm{x}}_{k|k-1}^{(l)} &= \bm{F}_{ll} \hat{\bm{x}}_{k-1|k-1}^{(l)} + \sum_{i\in \calN_l} \bm{F}_{li}\hat{\bm{x}}_{k-1|k-1}^{(i)}+\bm{G}_{l} \bm{u}_{k-1} ,\\ \label{SS-DKF2}
\hat{\bm{x}}_{k|k}^{(l)} &= \hat{\bm{x}}_{k|k-1}^{(l)} + \bar{\bm{K}}^{(l)} \left( \bm{y}_k^{(l)} - \bm{H}_l \hat{\bm{x}}_{k|k-1}^{(l)} \right),
\end{align}
which together form the SS-DKF algorithm. It is seen that the SS-DKF does not maintain the estimation error covariance and that its gain matrix can be computed offline prior to the estimation run. Although this comes at certain sacrifice of estimation accuracy, it still maintains stability under some assumptions and presents much appeal from a computational perspective. Before moving on to the computational complexity analysis in Section~\ref{Sec:Complexity-Analysis}, the stability of the SS-DKF algorithm is examined in the remainder of this section. 

We define the real state error $\bm{e}_k = \hat{\bm{x}}_{k|k}-\bm{x}_k $. For the real state error $\bm{e}_k^{(l)}$ at subsystem $l$, combining \eqref{SS-DKF} and \eqref{SS-DKF2}, we have
\begin{align}\label{SS-DKF-error}\nonumber
\bm{e}_k^{(l)} = & (\bm{I}-\bar{\bm{K}}^{(l)}\bm{H}_l)\bm{F}_{ll}\bm{e}_{k-1}^{(l)}+\sum_{i\in \mathcal{N}_l}(\bm{I}- \bar{\bm{K}}^{(l)}\bm{H}_l)\bm{F}_{li}\bm{e}_{k-1}^{(i)}\\&
-(\bm{I}-\bar{\bm{K}}^{(l)}\bm{H}_l)\bm{w}_{k-1}^{(l)}+\bar{\bm{K}}^{(l)}\bm{v}_k^{(l)},
\end{align}
where $\bm{I} \in \mathbb{R}^{N_l \times N_l}$. Aggregating $\bm{e}_k^{(l)}$ for $l\in \{ 1,\cdots, N_{\rm c}\}$ together yields
\begin{align}\label{SS-DKF-error2}
\bm{e}_k = 
(\bm{I}-\bar{\bm{K}}\bm{H})\bm{F} \bm{e}_{k-1}-(\bm{I}-\bar{\bm{K}}\bm{H})\bm{w}_{k-1}+\bar{\bm{K}}\bm{v}_k,
\end{align}
where $\bar{\bm{K}} = \mathrm
{blkdiag}\left\{\bar{\bm{K}}^{(1)},\cdots,\bar{\bm{K}}^{(N_{\rm c})}\right\}$ and identity matrix $\bm{I} \in \mathbb{R}^{N \times N}$. 

The stability of the SS-DKF algorithm is summarized as follows.

\begin{theorem}\label{convergence-theorem}
Let $\bm A = (\bm{I}-\bar{\bm{K}}\bm{H})\bm{F}$ and $\bm B = (\bm{I}-\bar{\bm{K}}\bm{H})\bm{Q}(\bm{I}-\bar{\bm{K}}\bm{H})^{\top}+\bar{\bm{K}}\bm{R}\bar{\bm{K}}^{\top}$.
If $\bm A$ is stable, the true error covariance ${\bm \Sigma}_k = \mathbf{E}\left[\bm e_k \bm e_k^{\T}\right]$ will converge to the unique  solution of the discrete-time Lyapunov equation
\begin{align}\label{lyapunov}
\bm \Sigma = \bm A \bm \Sigma \bm A^\T +\bm B.
\end{align}
\end{theorem}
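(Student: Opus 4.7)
The starting point is the recursion~\eqref{SS-DKF-error2} for the aggregated true error $\bm e_k$, which is already in the clean linear form
\begin{align*}
\bm e_k = \bm A \bm e_{k-1} - (\bm I - \bar{\bm K}\bm H)\bm w_{k-1} + \bar{\bm K}\bm v_k.
\end{align*}
The plan is to turn this into a discrete Lyapunov recursion for $\bm\Sigma_k$ and then invoke the standard fact that when the driving matrix $\bm A$ is Schur stable, the sequence converges to the unique solution of the corresponding algebraic Lyapunov equation.

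First, I would form the outer product $\bm e_k \bm e_k^{\T}$ and take expectations. The six cross terms split into three that pair $\bm e_{k-1}$ with either $\bm w_{k-1}$ or $\bm v_k$, one that pairs $\bm w_{k-1}$ with $\bm v_k$, and the three diagonal terms. The key observation is that $\bm e_{k-1}$ is $\sigma(\bm w_0,\ldots,\bm w_{k-2},\bm v_1,\ldots,\bm v_{k-1})$-measurable: the true state $\bm x_{k-1}$ is built from process noise up through $\bm w_{k-2}$, and the local estimates $\hat{\bm x}_{k-1|k-1}^{(l)}$ depend only on measurements up to time $k-1$. Hence $\bm e_{k-1}$ is independent of the zero-mean noises $\bm w_{k-1}$ and $\bm v_k$, and $\bm w_{k-1},\bm v_k$ are mutually independent by assumption, so every cross term vanishes in expectation. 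What remains is the clean recursion
\begin{align*}
\bm\Sigma_k = \bm A\,\bm\Sigma_{k-1}\,\bm A^{\T} + \bm B,
\end{align*}
where $\bm B$ is exactly the matrix defined in the statement.

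Next, I would invoke the standard discrete-time Lyapunov theory. Because $\bm A$ is assumed Schur stable and $\bm B$ is symmetric positive semidefinite (both summands in its definition are of the form $\bm M \bm S \bm M^{\T}$ with $\bm S\succeq 0$), the equation~\eqref{lyapunov} admits a unique positive semidefinite solution $\bm\Sigma$, representable as $\bm\Sigma = \sum_{j=0}^{\infty}\bm A^{j}\bm B(\bm A^{\T})^{j}$, where the series converges absolutely by the spectral radius condition $\rho(\bm A)<1$. Subtracting~\eqref{lyapunov} from the recursion above yields
\begin{align*}
\bm\Sigma_k - \bm\Sigma = \bm A(\bm\Sigma_{k-1}-\bm\Sigma)\bm A^{\T},
\end{align*}
which iterates to $\bm\Sigma_k - \bm\Sigma = \bm A^{k}(\bm\Sigma_0-\bm\Sigma)(\bm A^{\T})^{k}$. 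Since $\|\bm A^{k}\|\to 0$ geometrically, we conclude $\bm\Sigma_k \to \bm\Sigma$ in any matrix norm, independent of the initial covariance $\bm\Sigma_0$.

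The one genuinely delicate point, and the place I would expect the most scrutiny, is the independence argument in the first step. Unlike in the standard centralized KF analysis, $\hat{\bm x}_{k-1|k-1}^{(l)}$ here is produced by the suboptimal DKF recursion~\eqref{SS-DKF}--\eqref{SS-DKF2}, and it is important to verify that its dependence on the noise history still terminates at time $k-1$ (no look-ahead) so that it is independent of $\bm w_{k-1}$ and $\bm v_k$. A short induction on $k$ along the recursion~\eqref{SS-DKF}--\eqref{SS-DKF2} suffices: $\hat{\bm x}_{k|k-1}^{(l)}$ is a linear combination of prior estimates and the input, and $\hat{\bm x}_{k|k}^{(l)}$ updates it with $\bm y_k^{(l)}$, so the measurability claim propagates. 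Once this is secured, the remainder of the argument is routine Lyapunov-equation bookkeeping.
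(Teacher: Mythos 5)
Your proposal is correct and follows essentially the same route as the paper: both reduce the error recursion~\eqref{SS-DKF-error2} to the covariance recursion $\bm\Sigma_k = \bm A\bm\Sigma_{k-1}\bm A^{\T} + \bm B$ and then conclude convergence from the Schur stability of $\bm A$. The only difference is that you fill in the details the paper leaves implicit --- the independence argument that kills the cross terms and the explicit contraction $\bm\Sigma_k - \bm\Sigma = \bm A^{k}(\bm\Sigma_0-\bm\Sigma)(\bm A^{\T})^{k}$ --- where the paper simply asserts the recursion and cites a textbook for the limit.
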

\begin{proof}
It is seen that the propagation of $\bm \Sigma_k$ is governed by
\begin{align}\label{DKF-covariance}
\bm \Sigma_{k} = 
\bm A\bm \Sigma_{k-1}\bm{A}^\T  +\bm B.
\end{align}
Since $\bm A$ is stable, $\lim_{k\to\infty}\bm \Sigma_{k}=\bm \Sigma$ (see Chapter 3.3 in~\cite{varaiya1986stochastic}). 
\end{proof}

From above, Assumption~\ref{assumption1} lays the foundation for the derivation of the SS-DKF, and Theorem~\ref{convergence-theorem} indicates that a stable $(\bm{I}-\bar{\bm{K}}\bm{H})\bm{F}$ can guarantee the stability of the SS-DKF algorithm. However, a question then arises: will Assumption~\ref{assumption1} and the stability of $(\bm{I}-\bar{\bm{K}}\bm{H})\bm{F}$ hold for the LiB pack model in~\eqref{new-system}? 

An examination is given as follows. First, consider Assumption~\ref{assumption1}. Note that the thermal physics implies that the model established in Section~\ref{Sec:Modeling} is stable if the LiB pack operates normally and the numerical stability criterion is satisfied in discretization. 
Hence, $\bm F$ will be stable. Next, we partition $\bm{F}$ into the following form:
\begin{align}\label{F-decomposition}
\bm{F} = \bm{F}_{\rm d}+\bm{F}_{\rm od},
\end{align}
where the subscripts $\rm d$ and $\rm od$, respectively, denote diagonal blocks and off-diagonal blocks and $\bm{F}_{\rm d}=\textrm{blkdiag}\{\bm{F}_{11},\cdots,\bm{F}_{N_{\rm c},N_{\rm c}}\}$. According to Corollary 5.6.14 in~\cite{horn1987matrix}, one will have
\begin{align}\label{radius-of-Fod}
\rho(\bm{F}_{\rm od})=\textrm{lim}_{k\to \infty} \| \bm{F}_{\rm od}^k \|^{1/k}.
\end{align}
It is interesting to note that $\bm{F}_{\rm od}^2 = \bf 0$ in this application due to the serial structure of the LiB pack, which will be further shown in Section~\ref{Sec:Simulation} . Therefore, $\rho(\bm{F}_{\rm od})=0$ in this case. 
Invoking Lemma 5.6.10 in~\cite{horn1987matrix}, there exists a matrix norm $\|{\cdot}\|_*$ for any given $\epsilon>0$ such that 
\begin{align}\label{lyapunov2}
\rho( {\bm{F}_{\rm d} }) \le \|{ {\bm{F}_{\rm d} }}\|_* \le \rho({\bm{F}_{\rm d} }) +\epsilon.
\end{align}
Since it satisfies the triangle inequality, then
\begin{align}\nonumber 
\|{\bm{F}_{\rm d} }\|_* & \le \|{ \bm{F}}\|_* + \|{\bm{F}_{\rm od} }\|_* \\ \label{norm-inequality4} & \le
(\rho(\bm{F})+\epsilon_1) +(\rho(\bm{F}_{\rm od})+\epsilon_2),
\end{align}
where $\epsilon_1>0$ and $\epsilon_2>0$. Since $\rho(\bm{F}_{\rm od})=0$, \eqref{norm-inequality4} can be rewritten as
\begin{equation}\label{norm-inequality3}
\begin{aligned}
\|{\bm{F}_{\rm d} }\|_* \le
\rho(\bm{F})+\epsilon_1 +\epsilon_2.
\end{aligned}
\end{equation}
Thus, one can always find $\epsilon_1$ and $\epsilon_2$ to prove that $\bm{F}_{\rm d}$ is stable, thus validating Assumption~\ref{assumption1}.

Now, consider the stability of $(\bm{I}-\bar{\bm{K}}\bm{H})\bm{F}$. 
For notational simplicity, we denote $\bm{I}-\bar{\bm{K}}\bm{H}$ as $\tilde{\bm{I}}$. Then the objective is to show that $\tilde{\bm{I}}\bm{F}$ is stable. Recalling the matrix norm $\|{\cdot}\|_*$ in~\eqref{lyapunov2}, it is also submultiplicative and implies
\begin{align}\nonumber
\|{\tilde{\bm{I}}\bm{F} }\|_* & \le \|{ \tilde{\bm{I}}\bm{F}_{\rm d} }\|_* + \|{\tilde{\bm{I}}\bm{F}_{\rm od} }\|_* \\ \nonumber &
\le \|{ \tilde{\bm{I}}\bm{F}_{\rm d} }\|_* + \|{ \tilde{\bm{I}}}\|_* \|{\bm{F}_{\rm od} }\|_* \\ \nonumber
& \le
(\rho(\tilde{\bm{I}}\bm{F}_{\rm d})+\epsilon_3) +(\rho(\tilde{\bm{I}})+\epsilon_4)(\rho(\bm{F}_{\rm od})+\epsilon_5) \\ \label{norm-inequality2}& \le 
\rho(\tilde{\bm{I}}\bm{F}_{\rm d})+\epsilon_3+\rho(\tilde{\bm{I}})\epsilon_5+\epsilon_4 \epsilon_5,
\end{align}
where $\epsilon_3>0$, $\epsilon_4>0$, and $\epsilon_5>0$. 
Then, because $\bm{F}_{ll}(\bm{I}_l-\bar{\bm{K}}^{(l)}\bm{H}_l)$ is stable, the matrix $\bm{F}_d\tilde{\bm{I}}$ is stable. Following that $\lim_{k\to \infty}(\bm{F}_{\rm d}\tilde{\bm{I}})^k=\bf{0}$, $(\tilde{\bm{I}}\bm{F}_{\rm d})^k$ can be constructed as $\tilde{\bm{I}}(\bm{F}_{\rm d}\tilde{\bm{I}})^{k-1}\bm{F}_{\rm d}$, such that $\lim_{i\to \infty}(\tilde{\bm{I}}\bm{F}_{\rm d})^k=\mathbf{0}$. Therefore, it is always possible to find $\epsilon_3$, $\epsilon_4$ and $\epsilon_5$ to make the right-hand side of~\eqref{norm-inequality2} smaller than 1. Subsequently, $\rho(\tilde{\bm{I}}\bm{F})<1$ and $(\bm{I}-\bar{\bm{K}}\bm{H})\bm{F}$ is stable. 

{\color{blue}
\begin{remark}
{\rm (Extension to thermal runaway detection).} In above, the DKF and SS-DKF are developed to reconstruct the temperature field of a battery pack. They can be used as a tool to monitor the spatially distributed thermal behavior critical for a battery pack's safety. An extension of them to detect thermal runaway can be hopefully made. An idea is to consider the thermal runaway as an unknown disturbance that abruptly appears and applies to the model in~\eqref{new-system}. Then, the thermal runaway detection can be formulated as the problem of disturbance detection. KF-based approaches have been studied extensively for disturbance detection in the literature, e.g.,~\cite{gustafsson2000adaptive}, and can be potentially exploited here. Combining this idea and the design in this paper, we can promisingly build distributed KF-based approaches for thermal runaway detection. This will be an important part of our future work.
\end{remark}
}

\subsection{Computational Complexity Analysis}\label{Sec:Complexity-Analysis}

As aforementioned, the objective of distributing the CKF across the cells is to improve the computational efficiency toward enabling real-time reconstruction of a LiB pack's temperature field. In this subsection, the CKF, DKF and SS-DKF algorithms are analyzed and compared in terms of the computational complexity. The analysis is based on the number of arithmetic operations needed by each algorithm. Before proceeding further, let us consider the basic matrix operations. 
For two $n\times n$ matrices, their addition involves $n^2$ elementary additions, and their multiplication involves $n^3$ elementary multiplications and $(n-1)n^2$ elementary additions. The inverse of an $n\times n$ matrix requires $n^3$ elementary multiplications and $n^3$ elementary additions. These are the basic algebraic arithmetics involved in the considered algorithms. The complexity of each algorithm can be assessed by summing up all the arithmetic operations required at each time instant, with the results shown in Table~\ref{Tab:Complexity-Comparison}.

It is demonstrated in Table~\ref{Tab:Complexity-Comparison} that the CKF has the heaviest computation at $O \left(N^3 \right)$, which increases cubically with the size of the state space of the entire pack. This also implies that, when a pack has more cells, the computation would rise cubically with the cell number. The computational complexity at such a level can be unaffordable by a real-world onboard computing platforms. By comparison, the DKF is much more efficient than the CKF. Given that $N= N_{\rm c} N_l$ with $N_{\rm c} \ll N_l$, the DKF's arithmetic operations at $O(N_{\rm c} N_l^3)$ are only about one $N_{\rm c}^2$-th of the CKF's. In addition, with the computation increasing only linearly with the cell number, the DKF well lends itself to parallel processing, where the estimation for each cell is performed on a separate micro-processor at a complexity $O(N_l^3)$. In this scenario, an increase in the cell number will not add cost to the existing micro-processors. The SS-DKF unsurprisingly is the most computationally competitive. Its complexity at $O\left( N_{\rm c} N_l^3 \right)$ is even one order less than that of the DKF. Just like the DKF, it is also well suited for execution based on parallel processing.

\section{Numerical Simulation}\label{Sec:Simulation}

To reconstruct a LiB pack's temperature field, the previous sections presented a spatially resolved thermal model and distributed the KF for computationally fast estimation. {\color{blue}In this section, numerical simulation with a practical LiB pack is offered to illustrate the effectiveness of the proposed work. The simulation is performed using MATLAB.}

\begin{table}\centering
\begin{threeparttable}
\caption{LiB cell parameters used in the simulation~\cite{chen2005thermal}.}
\begin{tabular}{ccccc}
\toprule%
Layer/ & Thickness & Density & Heat capacity & Conductivity \\
Material & [cm] & [kg/m$^3$] & [J/(kg$\cdot$K)] & [W/(m$\cdot$K)] \\
\midrule
Anode$^{\star}$ & 0.0116 & 1347.33 & 1437.4 & 1.04 \\
Cathode$^{\star}$ & 0.014 & 2328.5 & 1269.21 & 1.58 \\
Separator$^{\star}$ & 0.0035 & 1008.98 & 1978.16 & 0.3344 \\
Cu foil$^{\star}$ & 0.0014 & 8933 & 385 & 398 \\
Al foil$^{\star}$ & 0.002 & 2702 & 903 & 238 \\
Metal case$^{\ast}$ & 0.07 & 2770 & 875 & 170 \\ 
Contact layer$^{\ast}$ & 0.05 & 1129.95 & 2055.1 & 0.60 \\
\bottomrule
\end{tabular}
\begin{tablenotes}
\small
\item Note: $\star$~component of cell units, and $\ast$~component of cell case.
\end{tablenotes}
\label{Tab:Cell-Parameters}
\end{threeparttable}
\end{table}

\subsection{Simulation Setting}\label{Sec:simulation-setting}
Consider a LiB pack that consists of three large-format high-capacity prismatic LiB cells connected and stacked in series. Here, the cells are the same ones as in~\cite{chen2005thermal}. Each cell has a capacity of 185.3 Ah and is 19.32 cm long, 10.24 cm wide, and 10.24 cm high. {\color{blue}As mentioned in Section~\ref{Sec:Modeling}, the cell has two portions: the core region and the metallic case. The core region is 19.08 cm long, 10 cm wide, and 10 cm high, housing three hundred smaller cell units connected in parallel. The structure of a cell unit is schematically shown in Figure~\ref{Fig:Unit-Schematic}.} The key parameters of the LiB cell are summarized in Table~\ref{Tab:Cell-Parameters}. In the simulation, it is assumed that the LiB pack operates in an environment with temperature maintained at 300 K. The convective heat transfer coefficient, emissivity on pack surface, and entropic heat transfer coefficient $\textrm{d} U_{\rm ocv}/ \textrm{d} T$ are set to be 30 W/(m$^2\cdot$K), 0.25, and 0.00022 VK$^{-1}$~\cite{chen2005thermal}, respectively. {\color{blue}The battery pack is discharged using a time-varying current profile, which is shown in Figure~\ref{Fig:Crate} and derived from the Urban Dynamometer Driving Schedule (UDDS)~\cite{UDDS}.} {\color{blue}Sensors are mounted on the battery pack as shown in Figure~\ref{Fig:grid-of-pack}. That is, five sensors are placed on each cell-air interface of a cell. Such a placement is straightforward and easy to implement. Associated with this, an intriguing question is how to optimally deploy the sensors toward achieving satisfactory estimation performance with a minimum number of sensors. While some results are reported in the literature, e.g.,~\cite{lin2014temperature,fang2014optimal,zhang2017sensor}, further research is still required to fully address this question.
}

\begin{figure}[t]
\centering
\includegraphics[trim = {20mm 65mm 30mm 30mm}, clip, width=0.43\textwidth]{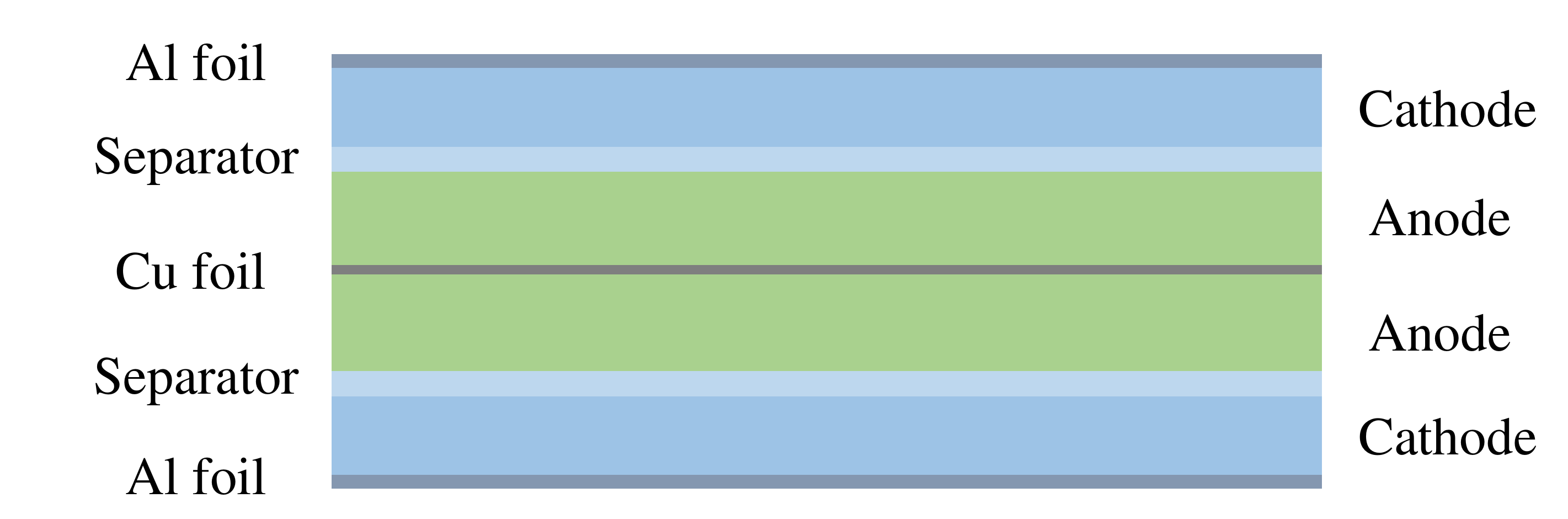}
\centering
\caption{Schematic diagram of a basic unit in a LiB cell.}
\label{Fig:Unit-Schematic}
\end{figure}


\begin{figure}[t]
\centering
\includegraphics[trim = {10mm 0mm 5mm 0mm}, clip, width=0.43\textwidth]{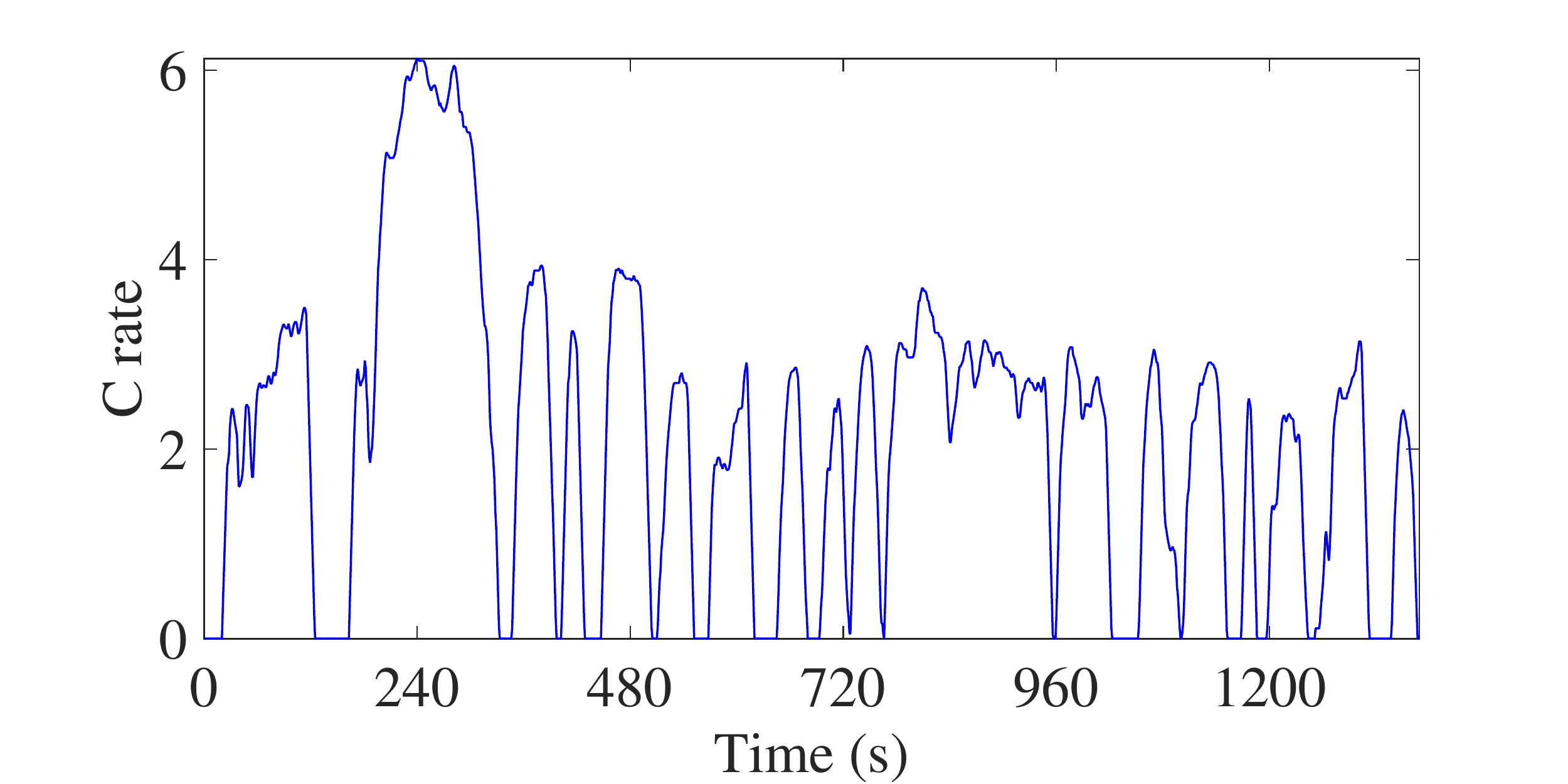}
\centering
\caption{\color{blue}Discharging current profile based on UDDS.}
\label{Fig:Crate}
\end{figure}

The thermal dynamics of the considered pack can be characterized by the PDE-based model in Section~\ref{Sec:Modeling}. Each cell is gridded in space with $m = 9$, $n = 5$ and $p = 5$ and in time with $\Delta t = 1~{\mathrm {s}}$. {\color{blue}In general, one can increase $m$, $n$ and $p$ and reduce $\Delta t$ to increase the accuracy of simulation. This, however, will come at the sacrifice of computational efficiency. Another risk lies in numerical instability, which can be caused if the selected $m$, $n$, $p$ and $\Delta t$ fail to satisfy certain conditions~\cite{bergman2006fundamentals}. To find a satisfactory set, one can consider a few candidates. S/he can first check the numerical stability for each set using the conditions in~\cite{bergman2006fundamentals}. Then, run the simulation for the sets that pass the check, and choose the set that leads to acceptable accuracy with minimum computational cost. This process understandably may require repeated trial effort.}

By discretization, the PDEs are converted to a state-space model as shown in Section~\ref{Sec:Model-Conversion}. On each cell-air interface of the cell, five sensors are deployed as shown in Figure~\ref{Fig:grid-of-pack}. It follows that the model matrices $\bm{F}$ and $\bm{H}$ for this pack take the following form:
\begin{align*}
\bm{F} = \left[ \begin{array}{ccc}
\bm{F}_{11} & \bm{F}_{12} & \\
\bm{F}_{21} & \bm{F}_{22} & \bm{F}_{23} \\
& \bm{F}_{32} & \bm{F}_{33}
\end{array} \right] \in \mathbb{R}^{675\times 675},\\
\bm{H} = \left[ \begin{array}{ccc}
\bm{H}_1 & & \\
& \bm{H}_2 & \\
& & \bm{H}_3 
\end{array} \right] \in \mathbb{R}^{70\times 675}.
\end{align*}
In the simulation, the pack's initial temperature is 300 K, the same with the ambient temperature. Yet, for the purpose of illustrating the estimation, the initial guess is 295 K in the simulation. The noise covariance matrices $\bm{Q}$ and $\bm{R}$ are chosen as
$$
\bm{Q} = 0.05^2 \cdot  \bm{I}^{675\times 675}, \ \bm{R} = 0.3^2 \cdot \bm{I}^{70\times 70},
$$
where $\bm{I}$ is the identity matrix. 

\begin{figure*} [t]
\centering 
\subfigure
{
\includegraphics[trim = {30mm 10mm 10mm 10mm}, clip, width=0.23\textwidth]{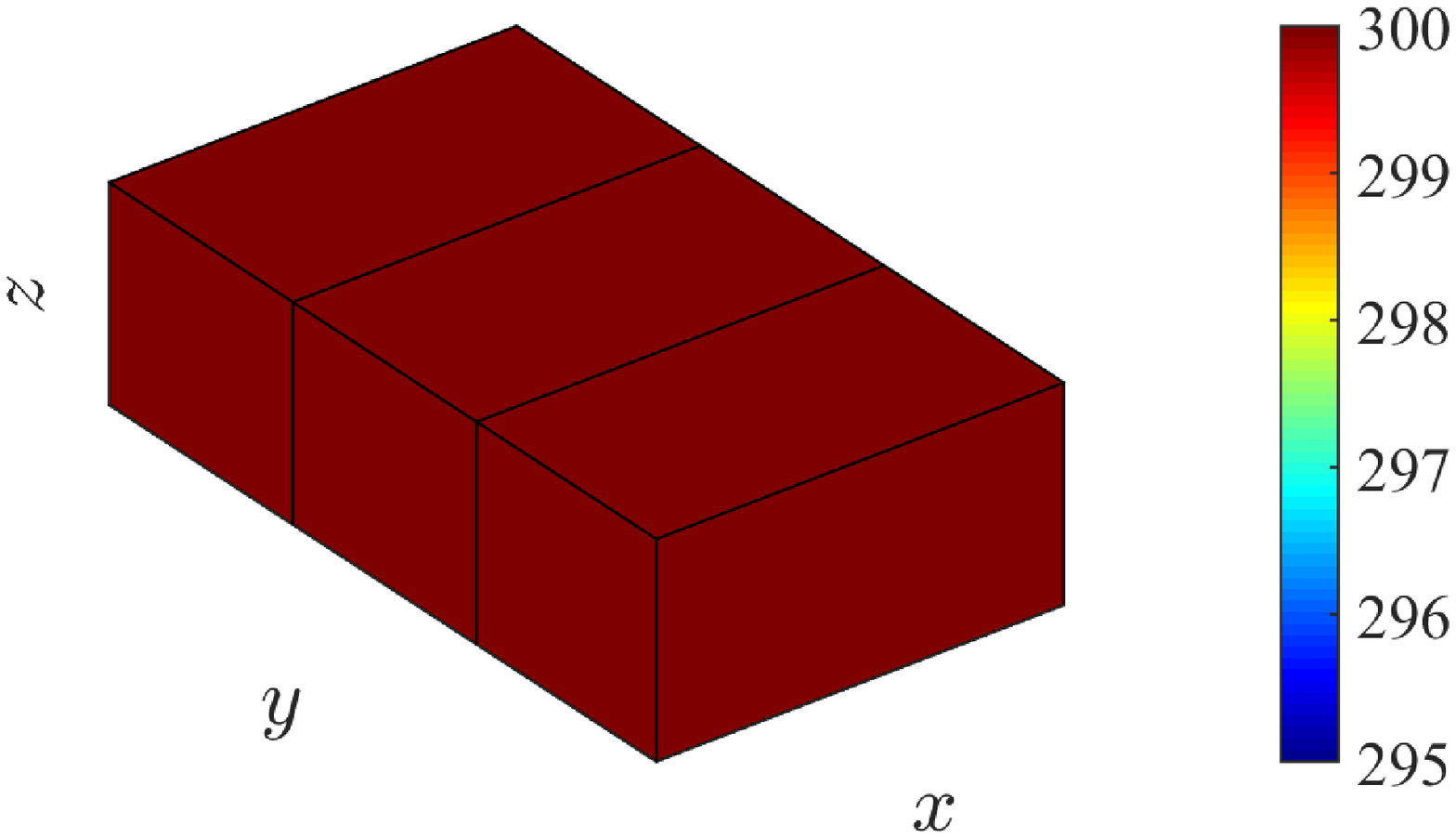}} 
\hspace{0in} 
\subfigure
{
\includegraphics[trim = {30mm 10mm 10mm 10mm}, clip, width=0.23\textwidth]{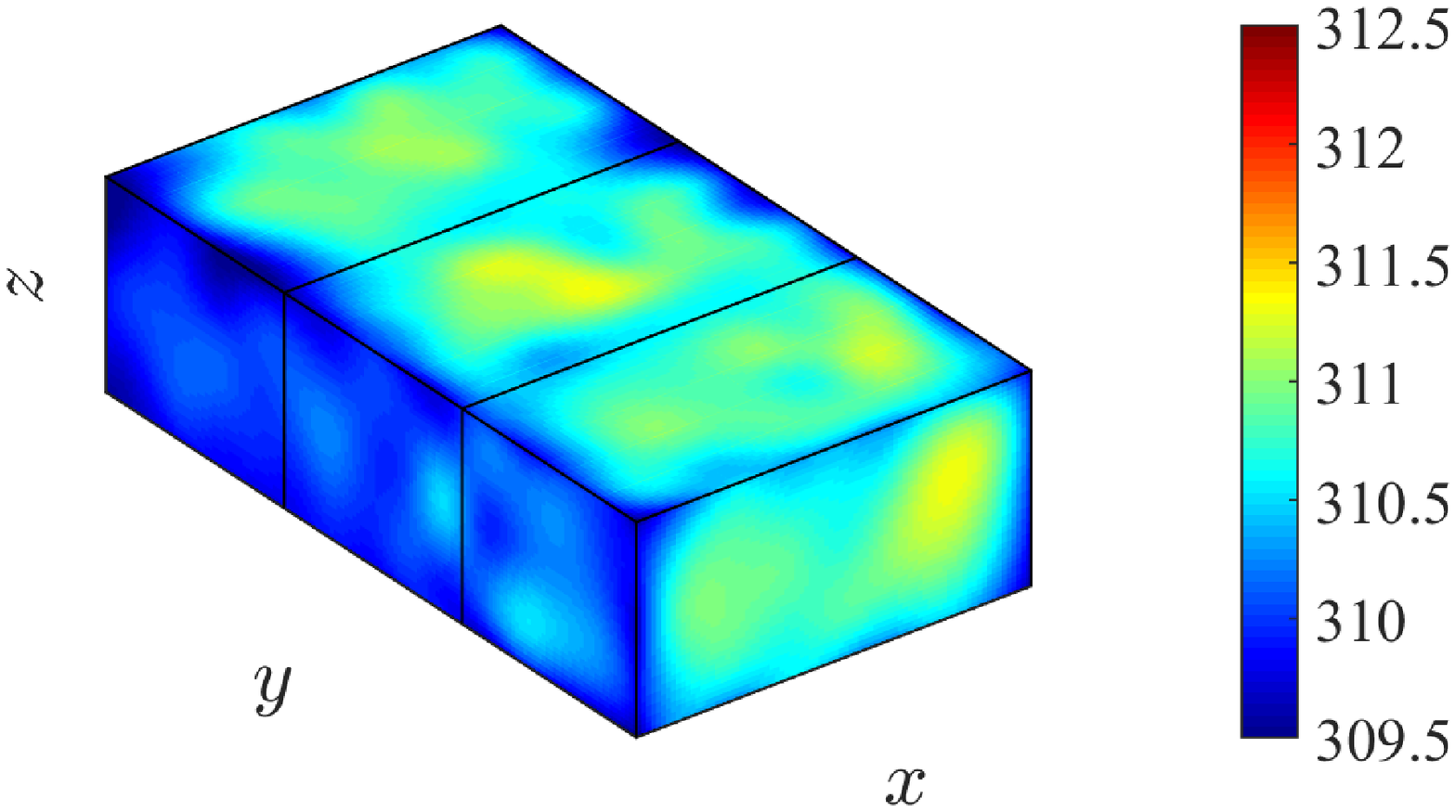}} 
\hspace{0in} 
\subfigure
{ 
\includegraphics[trim = {30mm 10mm 10mm 10mm}, clip, width=0.23\textwidth]{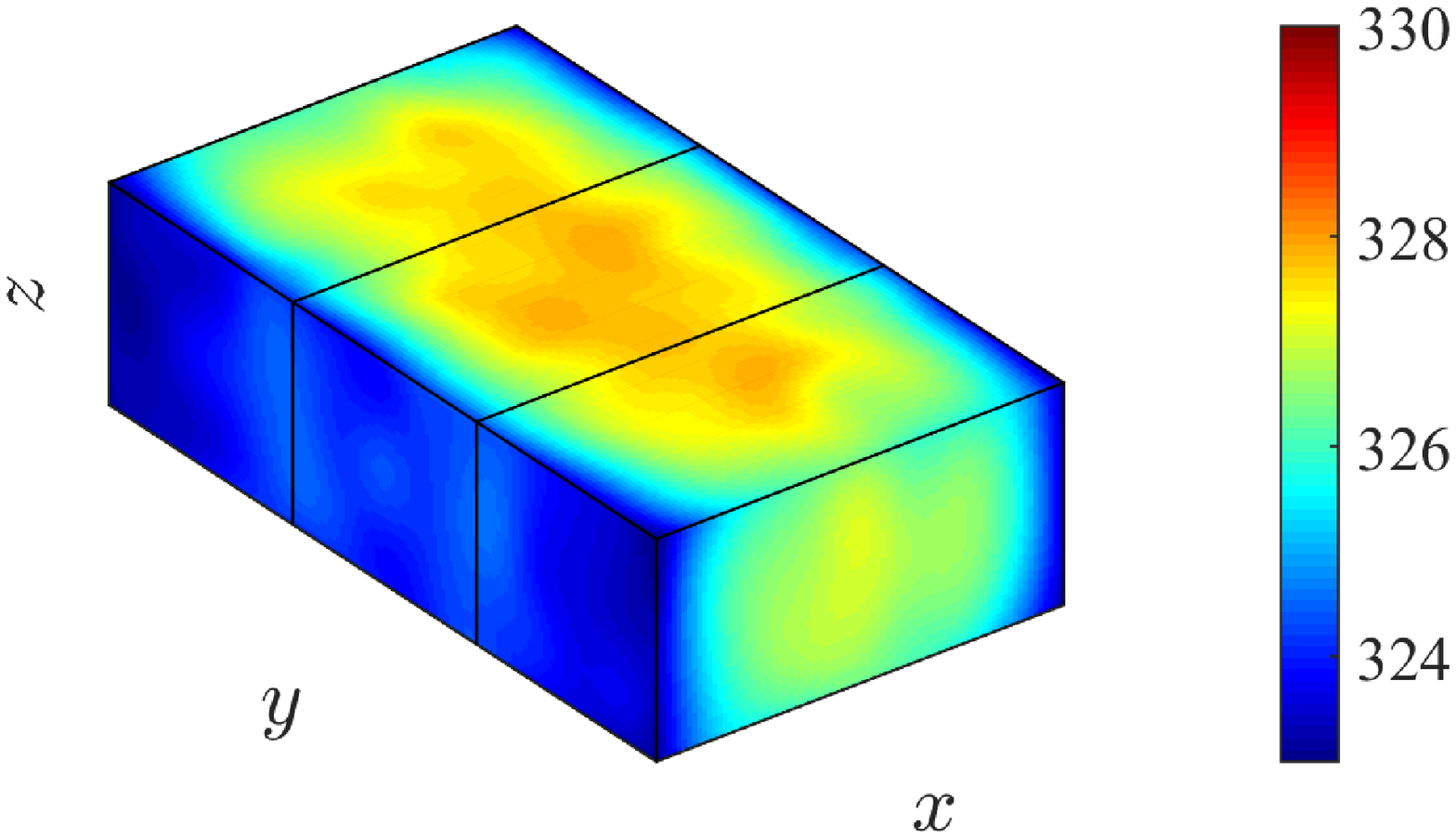}} 
\hspace{0in} 
\subfigure
{
\includegraphics[trim = {30mm 10mm 10mm 10mm}, clip, width=0.23\textwidth]{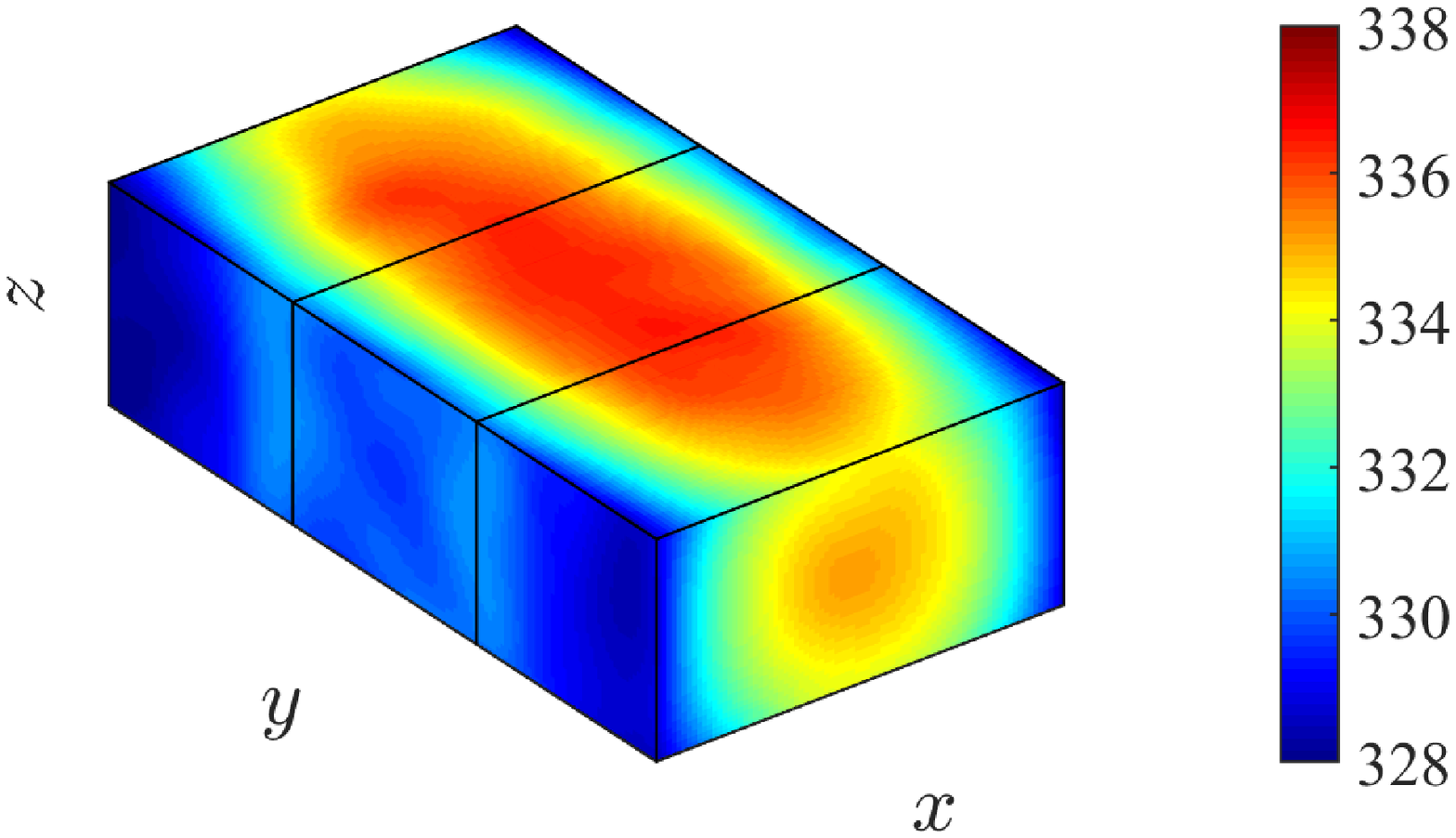}}\\
\subfigure
{
\includegraphics[trim = {30mm 10mm 10mm 10mm}, clip, width=0.23\textwidth]{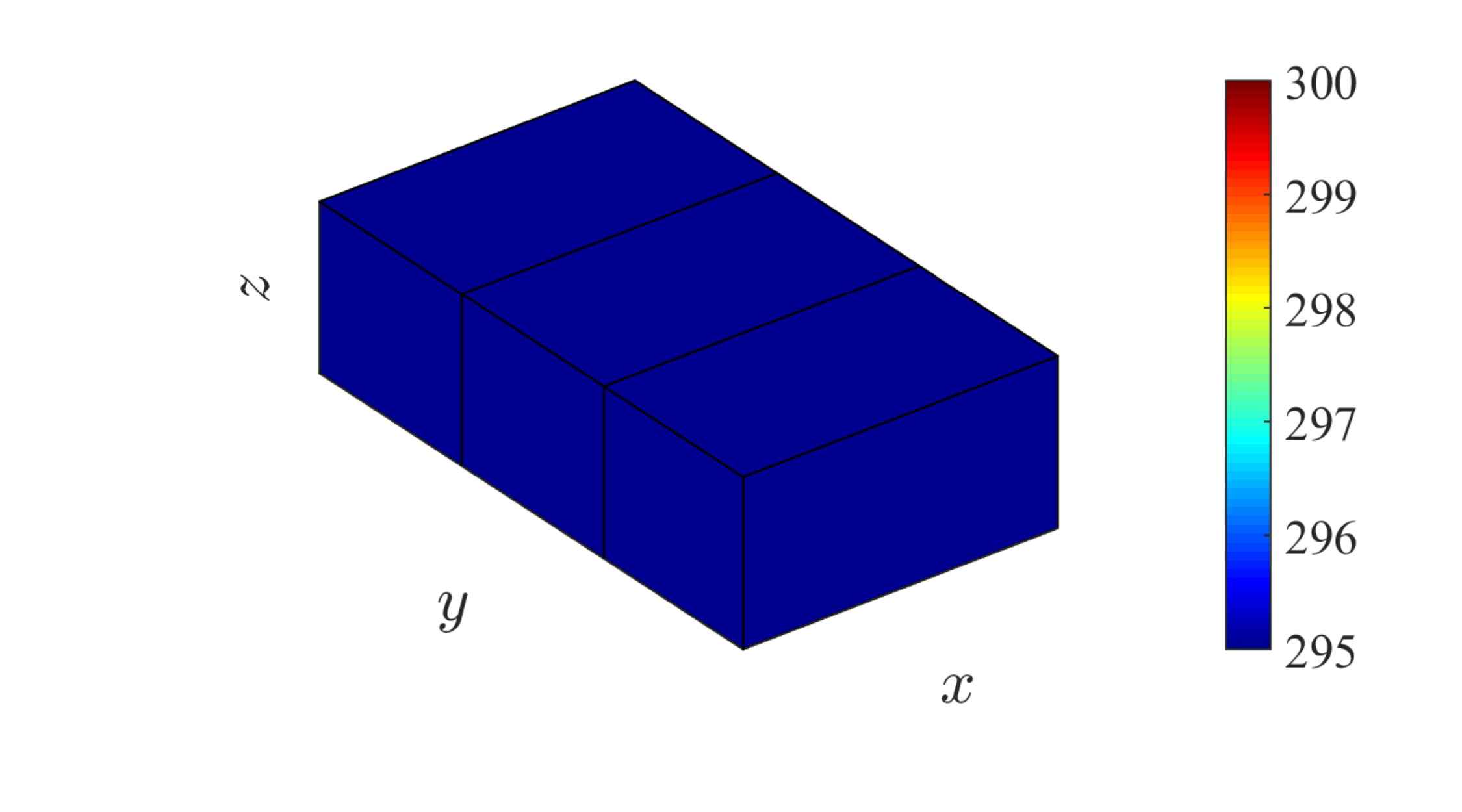}} 
\hspace{0in} 
\subfigure
{
\includegraphics[trim = {30mm 10mm 10mm 10mm}, clip, width=0.23\textwidth]{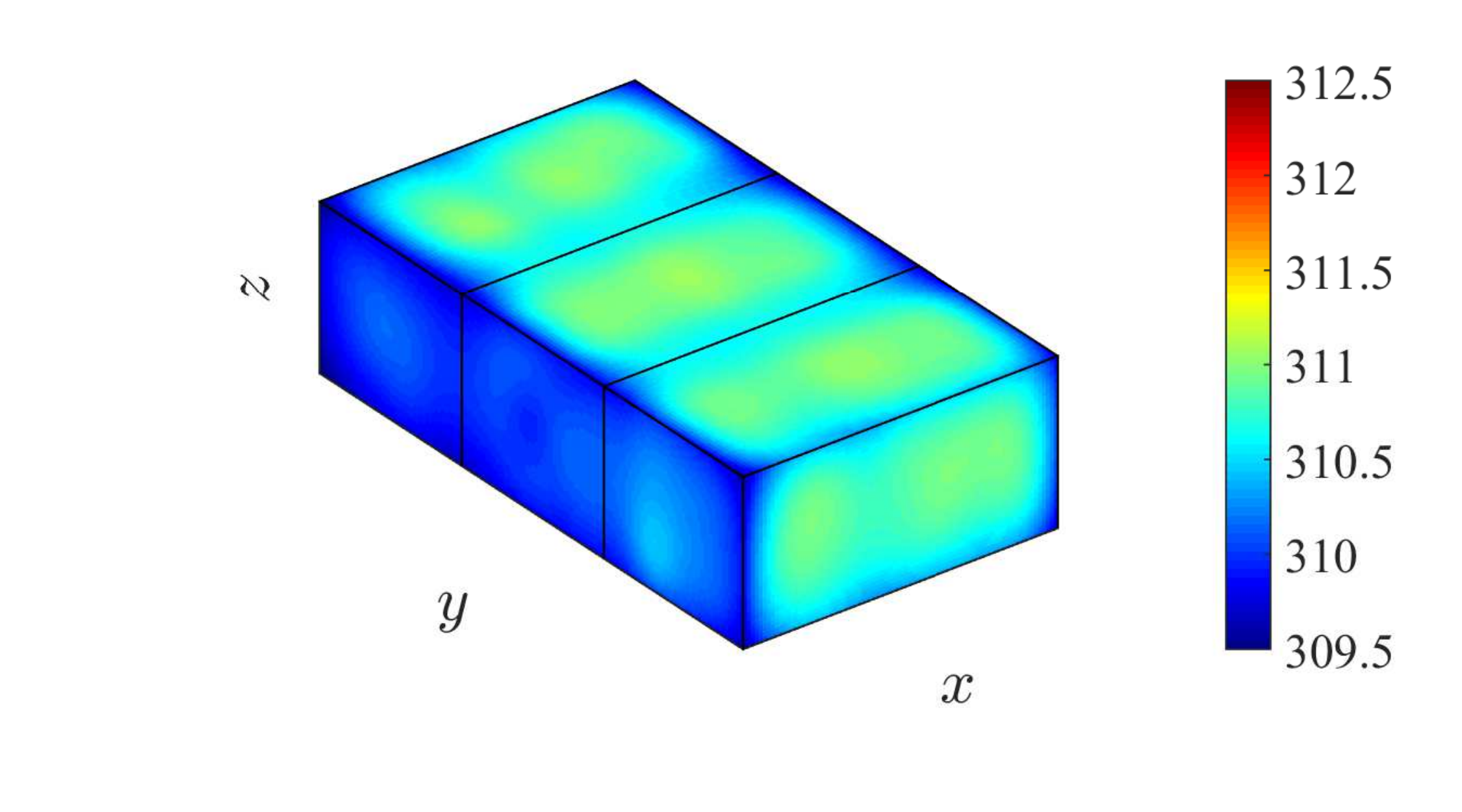}} 
\hspace{0in} 
\subfigure
{ 
\includegraphics[trim = {30mm 10mm 10mm 10mm}, clip, width=0.23\textwidth]{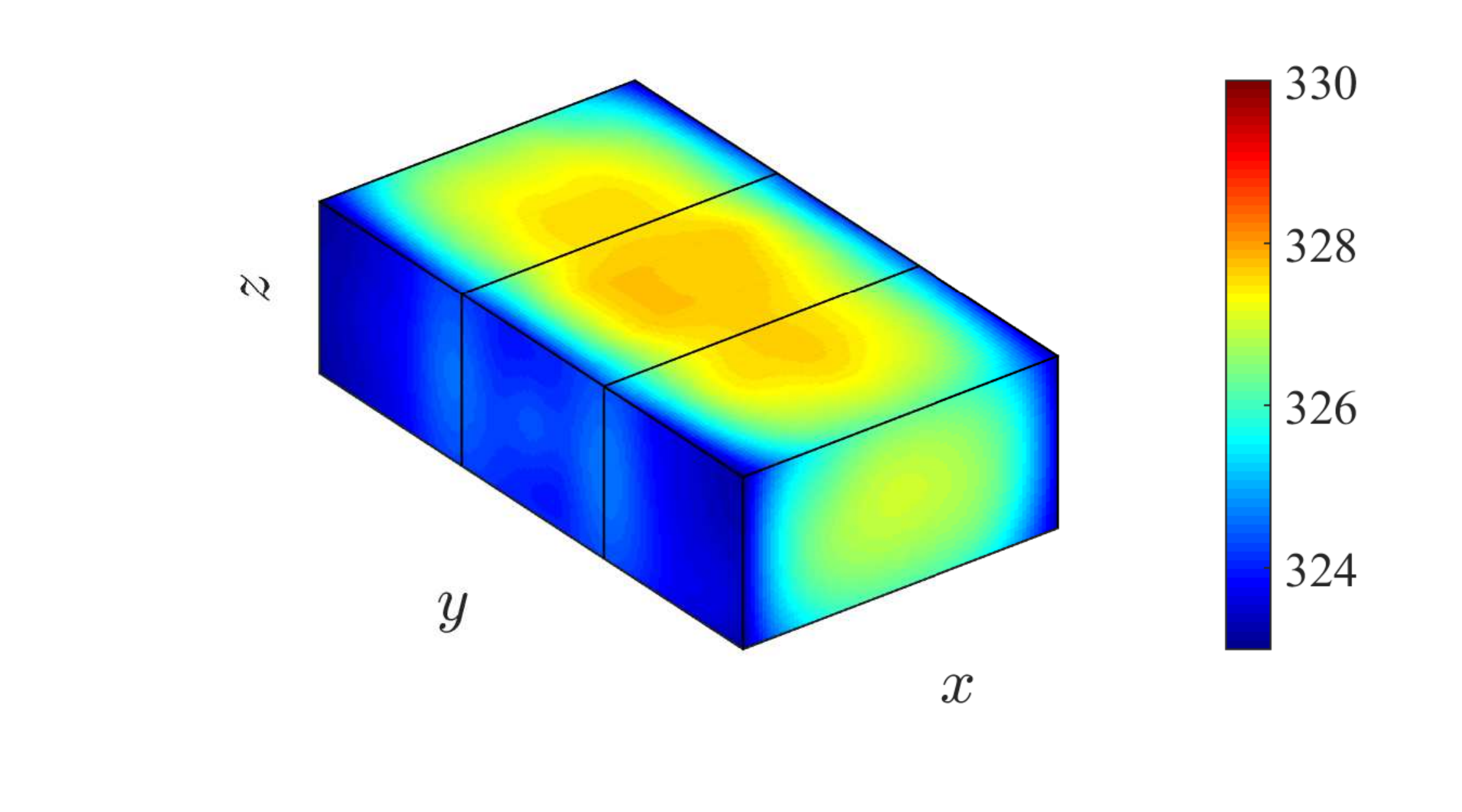}} 
\hspace{0in} 
\subfigure
{
\includegraphics[trim = {30mm 10mm 10mm 10mm}, clip, width=0.23\textwidth]{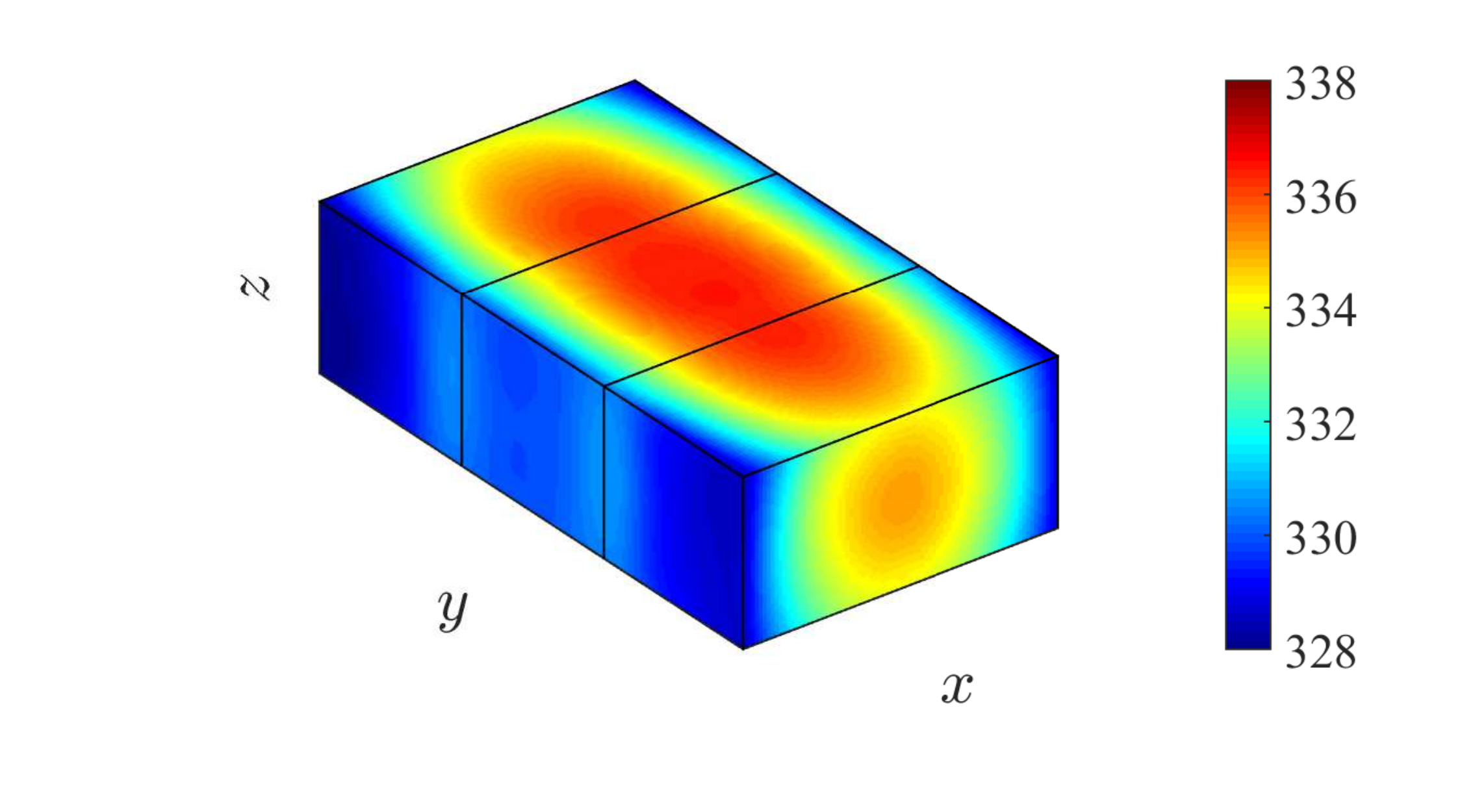}}\\
\subfigure[$t$ = 0 s]
{
\addtocounter{subfigure}{-8} 
\includegraphics[trim = {30mm 10mm 10mm 10mm}, clip, width=0.23\textwidth]{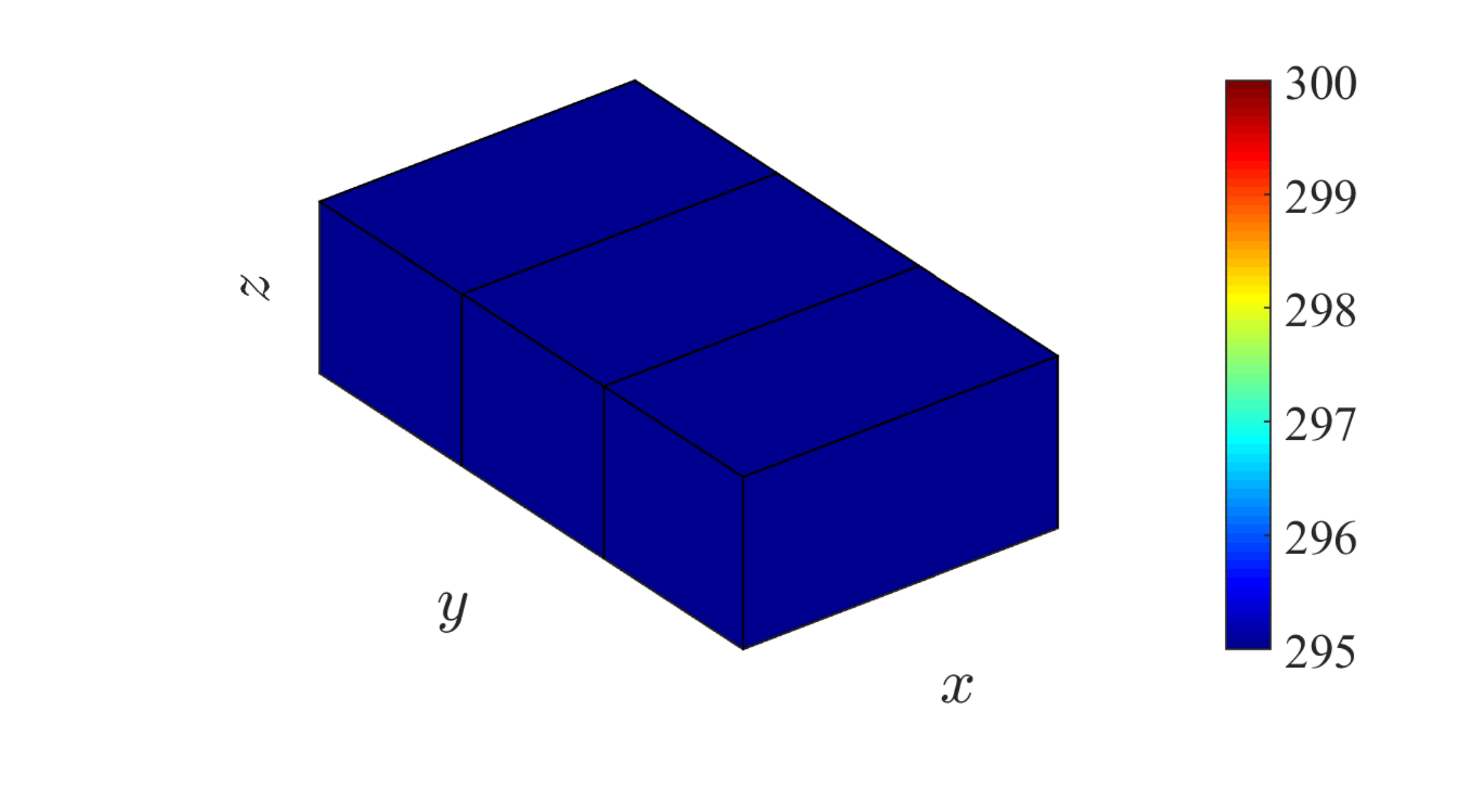}} 
\hspace{0in} 
\subfigure[$t$ = 240 s]
{
\includegraphics[trim = {30mm 10mm 10mm 10mm}, clip, width=0.23\textwidth]{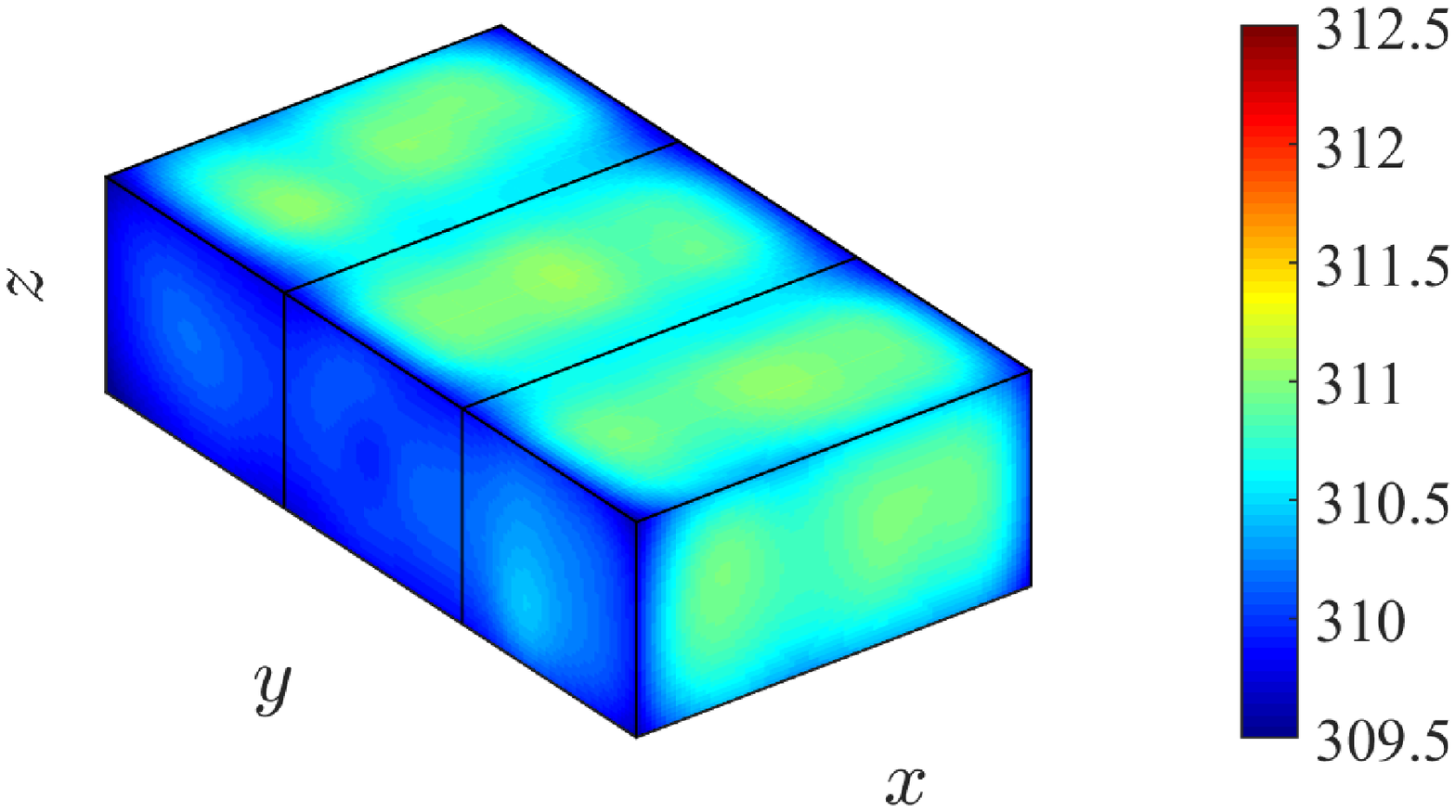}} 
\hspace{0in} 
\subfigure[$t$ = 480 s]
{ 
\includegraphics[trim = {30mm 10mm 10mm 10mm}, clip, width=0.23\textwidth]{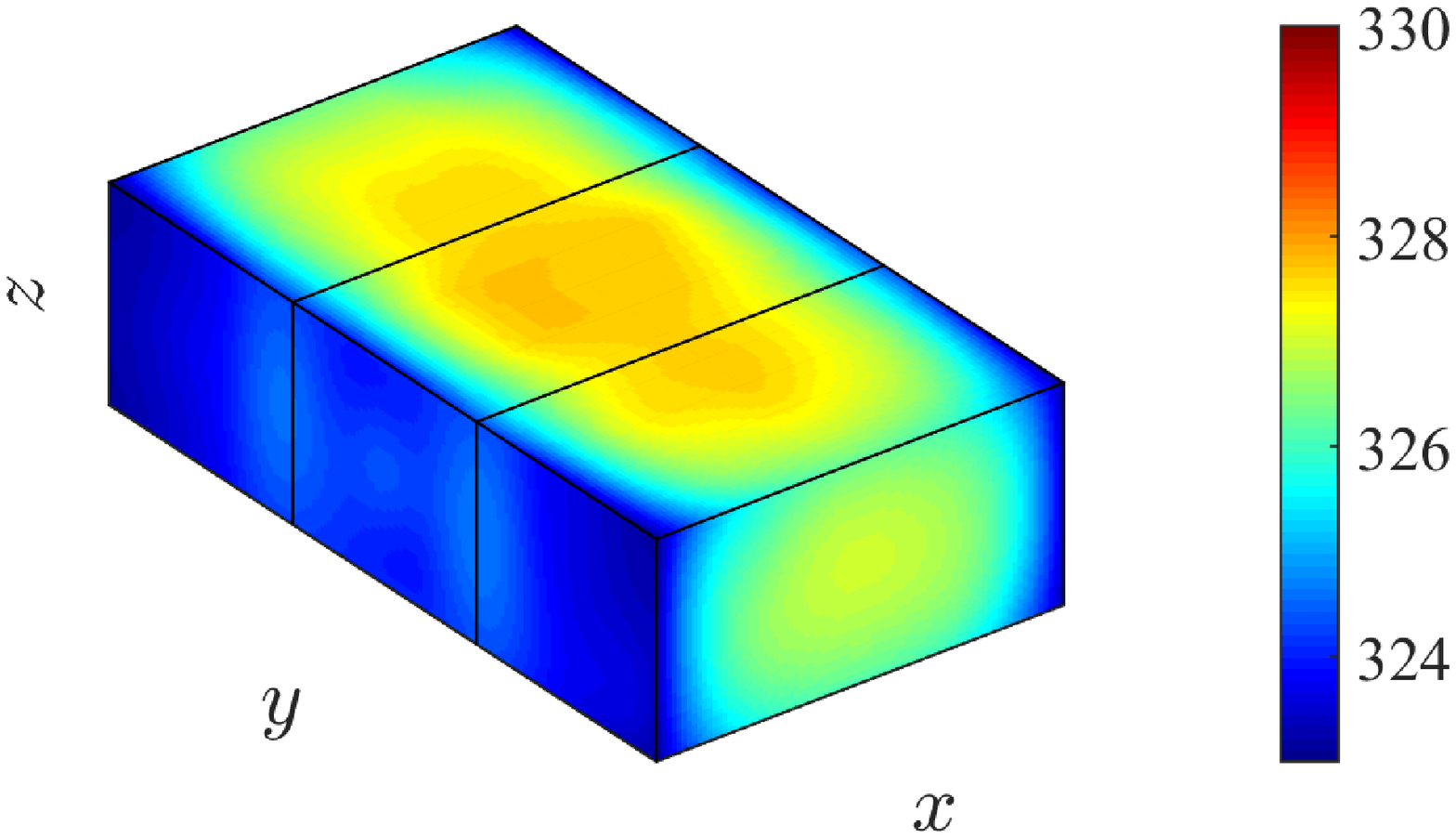}} 
\hspace{0in} 
\subfigure[$t$ = 1200 s]
{
\includegraphics[trim = {30mm 10mm 10mm 10mm}, clip, width=0.23\textwidth]{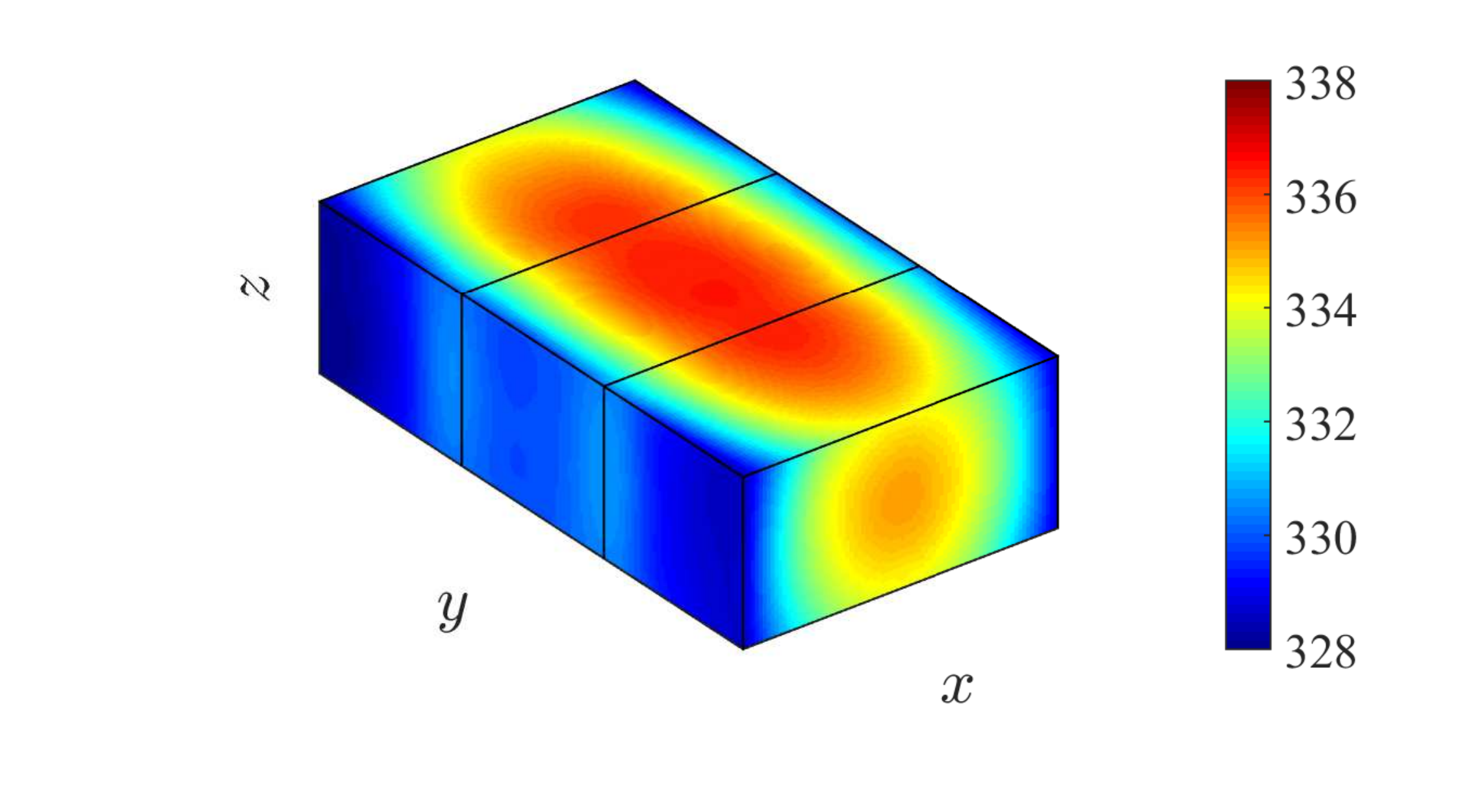}} 
\hspace{0in} 
\caption{{\color{blue}First row: real temperature fields through time for the LiB pack discharged at UDDS-based current loads. Second row: CKF-based reconstruction. Third row: DKF-based reconstruction.} } 
\label{Fig:Overall-Temperature-Fields}
\end{figure*}

\subsection{Simulation Results}\label{Sec:simulation-results}

{\color {blue}Consider the stochastic discrete-time state-space model~\eqref{new-system}. The real temperature field is obtained by running the state equation of~\eqref{new-system} using MATLAB with the effects of process noise included. Sensor-based measurements are obtained according to the measurement equation of~\eqref{new-system}, which are subjected to sensor noise. Using the measurements and based onthe model, the CKF and DKF are applied to reconstruct the temperature field.} The simulation results are summarized in Figure~\ref{Fig:Overall-Temperature-Fields}. The first row shows the true temperature field that evolves over time. Here, trilinear spatial interpolation is used to generate spatially continuous temperature fields. {\color{blue} As is shown, the pack sees an obvious temperature rise, despite the convection cooling and only three cells. In addition, it can be easily found that the temperature differs spatially across the pack, with a high gradient buildup at the end of the simulation. The second row shows the reconstructed temperature fields using the CKF, and the third row shows the ones based on the DKF. It is seen that, although the initial guess differs from the truth, both the CKF and DKF can generate temperature field estimation that gradually catches up with the truth. For both, satisfactory reconstruction starts from around 240 s and is maintained afterwards. Improved estimation accuracy can be expected if the initial temperature guess is made closer to the truth, which is not difficult on many practical occasions, since a LiB pack at rest for a long period will have almost the same temperature with the ambient environment. }

\begin{figure}[t]
\centering
\includegraphics[width=0.5\textwidth]{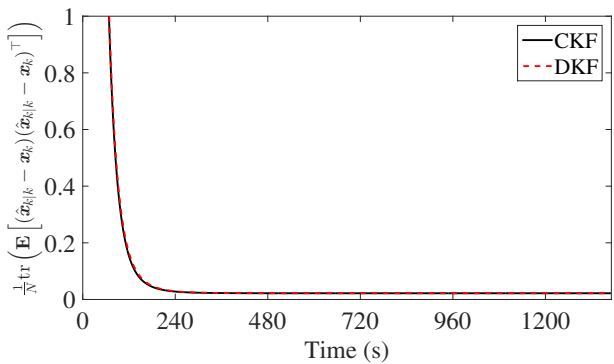}
\centering
\caption{{\color{blue}Evaluation of the estimation error over time.}}
\label{Fig:Covariance-Comparison-CKF-DKF}
\end{figure}

In Figure~\ref{Fig:Overall-Temperature-Fields}, the CKF and DKF present approximately the same estimation results. To further evaluate their accuracy, we consider the following metric:
\begin{align*}
\frac{1}{N} {\textrm{trace}} \left( \mathbf{E}\left[(\hat{\bm{x}}_{k|k}-\bm{x}_k) {(\hat{\bm{x}}_{k|k}-\bm{x}_k)}^{\top}\right ] \right),
\end{align*}
which is the trace of the estimation error covariance averaged over the state space and represents a statistical quantification of the state estimation error. Evaluation of the two algorithms based on this metric is illustrated in Figure~\ref{Fig:Covariance-Comparison-CKF-DKF}. One can see that they both exhibit a decreasing trend over time and remain quite close to each other. This implies that the DKF is quite comparable to the CKF in estimation performance. In addition, as analyzed in Section~\ref{Sec:Complexity-Analysis}, the DKF considerably outperforms the CKF in terms of computational complexity, which thus makes it a more desirable solution for the considered problem of temperature field reconstruction. 

\begin{figure} [t]
\centering 
\includegraphics[trim = {9mm 0mm 0mm 2mm}, clip, width=1.65in]{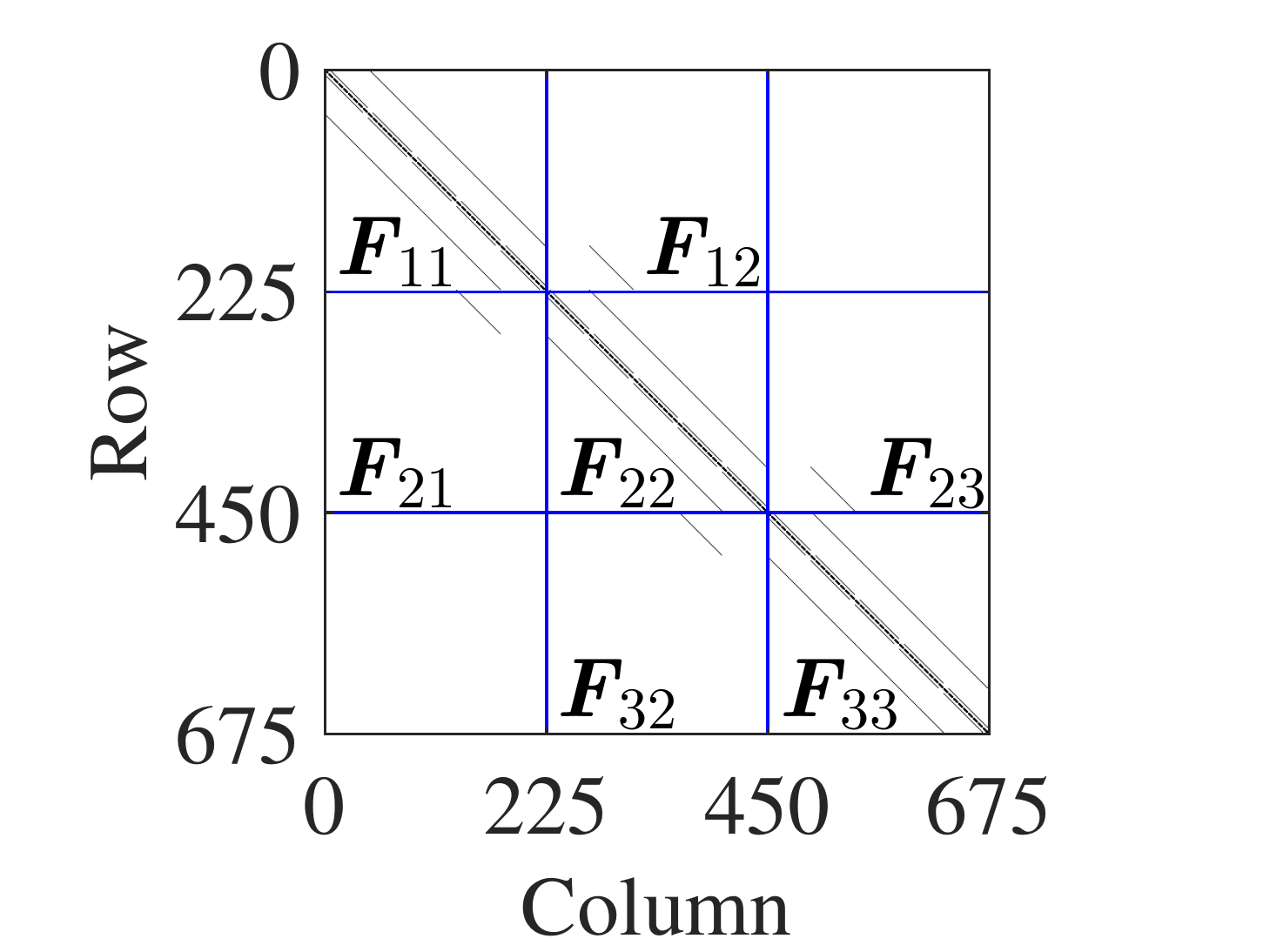}
\caption{Matrix $\bm{F}$ plotted in black and white.} 
\label{matrix-F} 
\end{figure}

{\color{blue}
Next, let us further examine the application of the SS-DKF approach proposed in Section~\ref{Sec:SS-DKF-algorithm}. In this case, the heat generation model~\eqref{heat-generation} is simplified by removing the negligible reversible entropic heating, and consequently, $\bm F$ is time-invariant.} Figure~\ref{matrix-F} offers a visual black-white display of $\bm{F}$, which shows non-zero elements in black and zero elements in white. Figure~\ref{matrix-F} shows that $\bm{F}$ is diagonally dominant, fundamentally ascribed to the pack's serial architecture. It is also found that $\bm{F}_{\mathrm{od}}^2$ equals $\bf 0$, where
\begin{align*}
\bm{F}_{\rm od} = \left[ \begin{array}{ccc}
& \bm{F}_{12} & \\
\bm{F}_{21} &   & \bm{F}_{23} \\
& \bm{F}_{32} & 
\end{array} \right] \in \mathbb{R}^{675\times 675}.
\end{align*}
Specifically, it is verifiable that each row vector of $\bm{F}_{\mathrm{od}}$ multiplied by any column vector $\in \mathbb R^{675\times1}$ in $\bm{F}_{\mathrm{od}}$ gives rise to zero. It can be further verified that this finding is applicable to any LiB pack with the considered configuration. Having established the stability of $\bm{F}$ and zero spectral radius of $\bm{F}_{\rm{od}}$, the SS-DKF algorithm will be asymptotically stable, as discussed in Section~\ref{Sec:SS-DKF-algorithm}. {\color{blue}Now running the CKF, DKF and SS-DKF, one can reconstruct the temperature field. The visual demonstration of the temperature field estimation is omitted to save space. However, a comparison of accuracy is provided in Figure~\ref{Fig:Covariance-Comparison-CKF-DKF-SSDKF}. It is observed that the SS-DKF is less accurate in the initial stage. However, it can achieve approximately the same accuracy after about 400 s. Given this result and its superior computational efficiency, the SS-DKF can be a worthy tool in practice. 
}

\begin{figure}[t]
\centering
\includegraphics[width=0.5\textwidth]{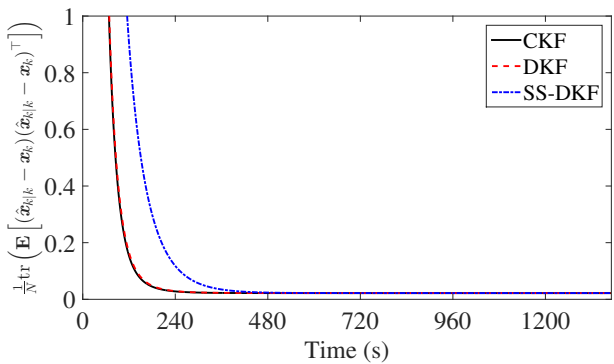}
\centering
\caption{{\color{blue}Evaluation of the estimation error over time when the simplified heat generation model is used.}}
\label{Fig:Covariance-Comparison-CKF-DKF-SSDKF}
\end{figure}

\section{Conclusion}\label{Sec:Conclusion}

This paper considers the real-time reconstruction of the three-dimensional temperature field across a LiB pack. Charging/discharging of LiB cells will generate heat and cause cell-wide temperature rise. With a few cells stacked together, a LiB pack can see stronger heat buildup and spatially distributed temperature. It is known that high temperature can shorten the life of LiBs and may even trigger disastrous fires or explosions. The issue can be more pronounced for packs composed of large-format high-capacity LiB cells, which are in growing use nowadays but have more intense heat generation. An important means of thermal monitoring for safety, the three-dimensional temperature field reconstruction is beneficial and necessary for LiB packs but still absent in the literature.

This paper thus is motivated to address this challenge through model-based temperature estimation. A thermal model is presented first to capture the thermal dynamic process of a LiB pack, which is based on heat transfer and energy balance analysis. Compared to the lumped models and thermo-electrochemical models, such a model achieves a balance between computational efficiency and physical integrity. Based on the model, the well-known KF approach is then distributed to achieve global temperature field reconstruction through localized estimation, reducing the computational complexity remarkably. A DKF algorithm, which is straightforward but well fits with the considered problem, is offered, and its steady-state version, SS-DKF, would require even less computation time. A detailed computational complexity analysis highlights the advantages of the distributed estimation approaches. Simulation with a LiB pack based on genuine cells demonstrates the effectiveness of the DKF and SS-DKF algorithms. The results point to the promises of the proposed methodology for creating more informative thermal monitoring toward improving thermal safety of LiB packs.

{\color{blue}A diversity of work will be performed in the future along this study,  including temperature field reconstruction based on more sophisticated models and heterogeneous cells, DKF-based thermal runaway detection, and experimental validation.}


\balance

\bibliographystyle{IEEEtran} 
\bibliography{reference} 
\end{document}